\theoremstyle{plain}
\newtheorem{proposition}{Proposition}
\theoremstyle{definition}
\newtheorem{definition}{Definition}
\newtheorem*{example}{Example}
\theoremstyle{remark}
\newcommand{\E}{\text{E}} 
\newcommand{\Var}{\text{Var}} 
\begin{document}

\title{Sequential importance sampling for multi-resolution Kingman-Tajima coalescent counting}

\author[*]{Lorenzo Cappello}
\author[*]{Julia A. Palacios} 
\affil[*]{Stanford University}

\maketitle



\abstract{Statistical inference of evolutionary parameters from molecular sequence data relies on coalescent models to account for the shared genealogical ancestry of the samples. However, inferential algorithms do not scale to available data sets. A  strategy to improve computational efficiency is to rely on simpler coalescent and mutation models, resulting in smaller hidden state spaces. An estimate of the cardinality of the state-space of genealogical trees at different resolutions is essential to decide the best modeling strategy for a given dataset. To our knowledge, there is neither an exact nor approximate method to determine these cardinalities. We propose a sequential importance sampling algorithm to estimate the cardinality of the sample space of genealogical trees under different coalescent resolutions. Our sampling scheme proceeds sequentially across the set of combinatorial constraints imposed by the data, which in this work are completely linked sequences of DNA at a non recombining segment. We analyze the cardinality of different genealogical tree spaces on simulations to study the settings that favor coarser resolutions. We apply our method to estimate the cardinality of genealogical tree spaces from mtDNA data from the 1000 genomes and a sample from a Melanesian population at the $\beta$-globin locus.}

\section{Introduction}
\label{intro}
Statistical inference of evolutionary parameters, such as effective population size $N(t)$, from molecular sequence data is an important task in population genetics, conservation biology, anthropology and public health \citep{nor98,ros02,liu13}. Inference of such parameters relies on the coalescent process that explicitly models the shared ancestry of a sample (genealogy) of $n$ individuals from a population. More specifically, in the standard neutral coalescent framework, observed molecular data $\mathbf{Y}$ at a non-recombining segment from a sample of $n$ individuals within a population, is the result of a point process of mutations with rate $\mu$ superimposed on the genealogy $\mathbf{g}$ of the sample. The genealogy itself is not directly observed but it is assumed to be a realization of a stochastic ancestral process (coalescent process) that depends on $N(t)$. Figure \ref{fig:coalescent} shows a realization of the standard coalescent (genealogy) and mutations. 

\begin{SCfigure}[2][h]
	\hspace{-0.2cm}\includegraphics[width=.6\textwidth]{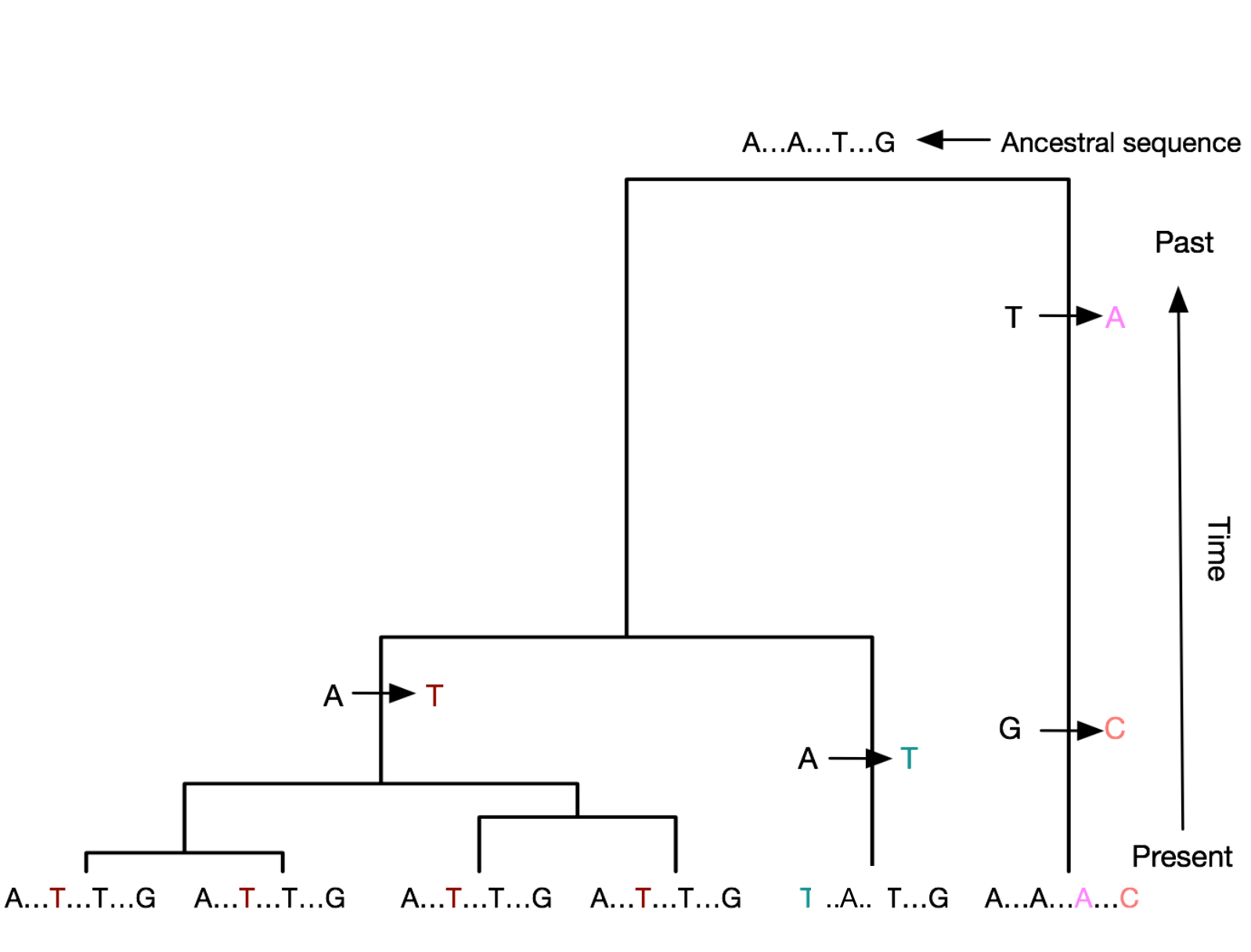}
	\caption{\small{\textbf{Coalescence and mutation.} A genealogy of 6 individuals at a locus of 100 base pairs is depicted as a bifurcating tree. 
			Four mutations (at different sites) are superimposed along the branches of the tree giving rise to the 6 sequences shown at the tips of the tree. The 96 sites (base pairs) that do not mutate are represented by dots and only the nucleotides at the polymorphic sites are shown.}}
	\label{fig:coalescent}
\end{SCfigure}

Both Bayesian and frequentist methods rely on the marginal likelihood calculated by integrating over the latent space of genealogies, that is:
\begin{equation}
\label{eq:integ}
P(\textbf{Y}| N(t), \mu)= \int_{\mathbf{g} \in \mathcal{G}\times \mathbb{R}^{n-1}} P(\textbf{Y}\mid \mathbf{g}, \mu) P(\mathbf{g}\mid N(t)) d\mathbf{g}.\end{equation}
Integration in the previous equation involves the sum over all possible tree topologies and $n-1$ integrals over coalescent times $\mathbf{t}\in \mathbb{R}^{n-1}$ (bifurcating times).  The integral in \eqref{eq:integ} is usually approximated via Monte Carlo (MC) or Markov chain Monte Carlo (MCMC). 
However, the cardinality of the hidden state space of tree topologies $|\mathcal{G}|$ grows superexponentially with the number of samples $n$, making integration over the space of genealogies already challenging for small $n$. 

In order to gain computational tractability, 
several methods have been proposed to infer $N(t)$ from summary statistics such as the site frequency spectra \citep{TerhorstSFS}, from an estimated genealogy \citep{pal13,Gattepaillegenetics.115.185058}, or from a small number of samples \citep{drummond2012bayesian}. \citet{Gao235} present an extensive list of implemented methods.

Alternative approaches that rely on lower resolution coalescent models have been recently proposed  \citep{sai15,sai18, pal18}. The appealing advantage of these approaches is the \textit{a priori} drastic reduction in the cardinality of the space of tree topologies for a fixed $n$. 
However, conditionally on a given dataset and the popular \textit{infinite sites} mutation model \citep{wat75},  the true reduction in cardinality, that is, the number of tree topologies for which $P(\mathbf{Y}\mid \mathbf{g},\mu)>0$ (compatible), is not known neither analytically nor approximately. 

In this work, we propose a set of algorithms to approximate the cardinality of different tree topology spaces modeled at different coalescent resolutions, the so-called Kingman-Tajima resolutions \citep{sai15}. Reliable estimation of the cardinality of the coalescent hidden state space should provide a valuable guidance to statisticians in designing methods employing these different resolutions. State-space count also offers an important auxiliary tool for practitioners: first, it can aid tuning parameters of the MCMC chains-- e.g. length of the chain; second, and closely related, it is informative of the computational feasibility of coalescent based inference for a given dataset -- e.g. we will quantify how it is not solely sample size that drives computational feasibility  but, for fixed $n$, the state space size, and consequently the computational burden, varies largely as a function of the data at hand; lastly, it may offer a convergence diagnostic criteria for sampling methods -- e.g. what proportion of the state space has been explored in the approximate posterior distribution.
In addition, the cardinality of the topological tree space is already an input of some inferential algorithms beyond MCMC, such as the combinatorial sequential Monte Carlo \citep{wan15}. 

Counting genealogical trees is a very active area of research in biology and mathematics starting from \citet{cay56}; see \citet{ste16} for a review. To our knowledge, the large body of work in this area has focused on \textit{exact} combinatorial results or \textit{recursive} algorithms to explore a constrained space. In this work, the combinatorial question of counting the number of compatible tree topologies with the data is treated as a \textit{statistical} problem: estimation of the normalizing constant of a uniform discrete distribution over the space of compatible tree topologies \citep{jer86}. In this work we estimate the normalizing constant by sampling compatible trees.

Lacking a trivial uniform sampling algorithm in this context, there are two classes of methods to estimate the cardinality of discrete structures subject to constraints: MCMC and sequential importance sampling (SIS). There is a large literature documenting both applications to challenging combinatorial problems and good empirical performances of both MCMC methods \citep{jer96, bla09, sin12}, and SIS methods \citep{knu76,chen05,bli11,chen18,dia18}. However, there is not a prevailing consensus that one method outperforms the other, even within the same application. Moreover, we are not aware of the use of these methods in the context of coalescent models. \\
Our estimation method is an instance of SIS. More specifically, our algorithm sequentially samples topologies $g$ compatible with the data with a tractable sampling probability $q(g)$. The SIS estimation of the cardinality is computed by a Monte Carlo approximation of the following expectation:
\begin{equation}
\label{eq:count1}
\E_q\left[\frac{1}{q(g)}\right]= \sum_{g \in \mathcal{G}_C} \frac{1}{q(g)} q(g)= |\mathcal{G}_C|,
\end{equation}
where $\mathcal{G}_C$ is the space of compatible tree topologies. The main contribution of this work is a set of algorithms that sample only compatible tree topologies under different coalescent models, and consequently correspond to different proposals $q$. Whereas the focus of this work is the estimation of state-space cardinalities,  it is easy to see that the same procedure can be applied to enumerate tree topologies with certain features of interest. For example, the number of balanced trees and trees with certain shape are indicatives of  population structure and phylogenetic diversity \citep{Ferretti229,Maliet2018}, the number of cherries and pitchforks are indicatives of neutrality \citep{grif87,di13}. 
Although we do not explore this research direction in this paper, our algorithms offer a building block to study how a neutral coalescent model fits the data set at hand.

The rest of the paper proceeds as follows. Section 2 reviews the Kingman-Tajima coalescent and the perfect phylogeny representation of molecular sequence data. In Section 3, we present the sampling algorithms.  In section 4 we analyze the cardinality of genealogical spaces under different coalescent resolutions from simulated data and in section 5 we present two case studies, one case study from simulated data and one case study of a sample of human mtDNA from the 1000 genomes and from other human DNA datasets. Section 6 concludes. 
%


\section{Preliminaries}
\label{prel}

\subsection{Kingman-Tajima coalescent}

%

\textit{Kingman's coalescent}  is a continuous-time Markov chain 
with state space the set of partitions of the label set $[n]= \{1, \ldots, n\}$ of the n individuals in a sample \citep{kingn82}. The process starts at $\{\{1\},\ldots,\{n\}\}$, it then jumps when two of the n
individuals coalesce (represented as the merger of two branches in a single internal node in the genealogy). The state of the process after the first transition is the partition of $[n]$ into $n-1$ sets, one set with the labels of the two individuals that coalesce and $n-1$ singleton sets with the labels of the remaining individuals. The process ends when all individuals coalesce, i.e. at state $\{1,\ldots,n\}$ when there is a single set (at the root of the genealogy when all individuals have a common ancestor). 

A complete realization of Kingman's coalescent process is commonly represented as a timed bifurcating tree (genealogy) denoted by $\mathbf{g}^{K}=\{g^{K},\mathbf{t}\}$.  In this work we concern ourselves with the tree topology only, \textit{i.e.} a complete realization of the embedded jump chain of the process $g^{K}=\{c_{i}\}_{i=n:1}$, where $c_i$ is the state of the process when there are $i$ branches. A genealogical representation of $g^{K}$ is given in Figure~\ref{fig:tree_bijection}(b) and the corresponding chain in Figure~\ref{fig:tree_bijection}(a). Superindex $K$ in $g^{K}$ serves to distinguish a Kingman's tree topology to any other type of tree topology. The transition probability of the jump chain is:

\begin{equation}
\label{king_move}
P(c_{i-1} \mid c_{i})=\left\{
\begin{array}{ll}
\binom{i}{2}^{-1} \,\ \,\ \,\ \,\ \text{if}  \ \,\ c_{i-1} \prec c_i\\
0 \,\ \,\ \,\ \,\  \,\ \,\ \,  \,\ \,\  \text{otherwise}
\end{array}
\right.
\end{equation}
where $c_{i-1} \prec c_i$ means that $c_{i-1}$ can be obtained from joining two elements of $c_i$. 
It follows from \eqref{king_move} that $P(g^K)= 2^{n-1}/[n! (n-1)!],$
i.e. the discrete uniform over all possible chain trajectories. We will use $\mathcal{G}_n^K$ to denote the space of such Kingman's topologies.
\begin{figure}[!htbp]
	\includegraphics[clip, width=1\textwidth]{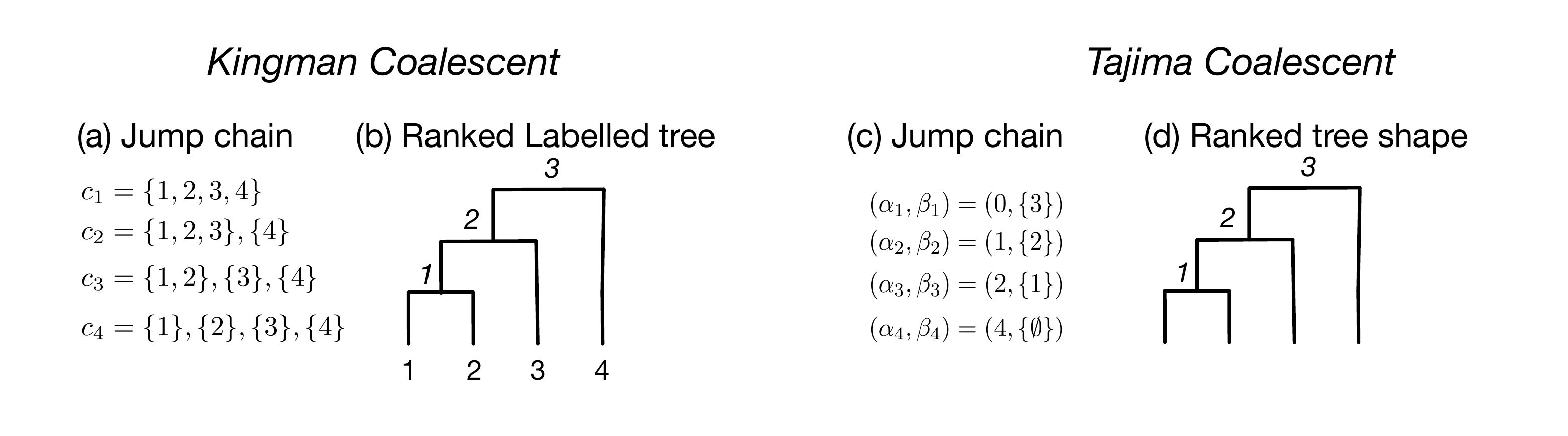}
	\caption{\textbf{Coalescent tree topologies.} \textbf{(a)} A complete realization from Kingman's jump chain, and \textbf{(b)} its corresponding bijection: a ranked labeled tree topology. \textbf{(c)} A complete realization from Tajima's jump chain, and \textbf{(d)} its corresponding bijection: a ranked tree shape.}\label{fig:tree_bijection}
\end{figure}

\textit{Tajima's coalescent}  is a continuous-time Markov chain whose complete realization is also in bijection with a timed bifurcating tree. Its embedded jump chain $\{(\alpha_{i},\beta_{i})\}_{i=n:1}$ keeps track of the number of singletons $\alpha_i$ and the set of extant vintage labels $\beta_{i}$ when there are $i$ branches \citep{taj83,sai15}. We refer to singleton branch as a branch in the tree that subtends a leaf, and a vintage as the internal branch that subtends the subtree labeled by the ranking at which the subtree was created in the jump chain.  Since singletons' labels are ignored, there are up to three types transitions: two singletons merge, one singleton and a vintage merge, or two vintages merge. Formally, given a current state $(\alpha_j, \beta_j)$, when there are $j=\alpha_j+|\beta_{j}|$ branches in the genealogy, the chain transitions to $\alpha_{j-1}=\alpha_{j}-2$ and $\beta_{j-1}=\beta_{j} \cup \{j\}$ if two singletons create a new vintage node with label $\{j\}$; the chain transitions to $\alpha_{j-1}=\alpha_{j}-1$ and $\beta_{j-1}=\beta_{j} \setminus \{i\} \cup \{j\}$ if one singleton and vintage branch with label $\{i\}$ merge to create a new vintage node with label $\{j\}$; and the chain transitions to $\alpha_{j-1}=\alpha_{j}$ and $\beta_{j-1}=\beta_{j} \setminus \{i,k\}\cup \{j\}$ if vintages $\{i\}$ and $\{k\}$ merge to create a new vintage with label $\{j\}$. 
The process starts at  state $\alpha_n=n$ and $\beta_n=\emptyset$ (at the tips of the tree). The chain then jumps to $\alpha_{n-1}=n-2$ and $\beta_{n-1}=\{1\}$ (with probability one since this is the only possible transition at this step), and the vintage $\{1\}$ is created. The process ends at the root when there is a single vintage, i.e. $\alpha_{1}=0$, and $\beta_{1}=\{n-1\}$. 
A complete realization of Tajima's coalescent continuous process can be represented as a genealogy $\mathbf{g}^{T}=\{g^{T},\mathbf{t}\}$. A complete realization of the jump chain of the process is denoted by $g^{T}=\{(\alpha_{i},\beta_{i})\}_{i=n:1}$ (Figure \ref{fig:tree_bijection}(c)). The jump chain has the following transition probabilities:

\begin{eqnarray}
\label{eq:taj_move}
P\Big[(\alpha_{i-1},\beta_{i-1})\Big| (\alpha_{i},\beta_{i})\Big]=\begin{cases}
\frac{\binom{\alpha_{i}}{\alpha_{i}-\alpha_{i-1}}}{\binom{\alpha_{i}+|\beta_{i}|}{2}} & \text{if \ \ \  }  (\alpha_{i-1},\beta_{i-1}) \prec (\alpha_{i},\beta_{i})\\
0 & \text{otherwise}\\
\end{cases},
\end{eqnarray}
Given \eqref{eq:taj_move}, one can compute the probability of a Tajima's tree topology $g^T$ as $P(g^T)=2^{n-c(g)-1}/(n-1)!,$ where $c(g)$ is the number of coalescent events joining two singletons. We will use $\mathcal{G}_n^T$ to denote the space of such Tajima's topologies.

While Kingman's coalescent keeps track of who is related to whom, Tajima's coalescent describes the evolutionary relationships of a sample of $n$ individuals by keeping track of the number of singletons and the vintage labels of extant ``families". We note that Tajima's coalescent has the same number of transitions and wait time distribution as Kingman's coalescent. 
Tajima's coalescent is a lower-resolution coalescent process since it takes values in a smaller state-space than Kingman's. \citet{sai15} formalize this notion and describe in detail other coalescent resolutions. 


The corresponding tree topology under Kingman coalescent $g^{K}$ is a \textit{ranked labeled tree} and the corresponding tree topology under Tajima coalescent $g^{T}$ is a \textit{ranked tree shape} (Figure \ref{fig:tree_bijection}). The formal definitions are as follows:
\begin{definition}
	A \textit{ranked labeled tree} is a rooted binary tree with unique labels at the tips and a total ordering (ranking) for the internal nodes.
\end{definition}

\begin{definition}
	A \textit{ranked tree shape} is a rooted binary unlabeled tree with a total ordering (ranking) for the internal nodes.
\end{definition}

Although our main objective is to analyze Kingman and Tajima tree topologies, we extend our analysis to the corresponding unranked tree topologies: unranked labeled tree and tree shapes. Figure \ref{fig:tree_topologies}  shows the four tree topologies analyzed in this manuscript.

\begin{figure}[!htbp]
	\includegraphics[clip, width=1\textwidth]{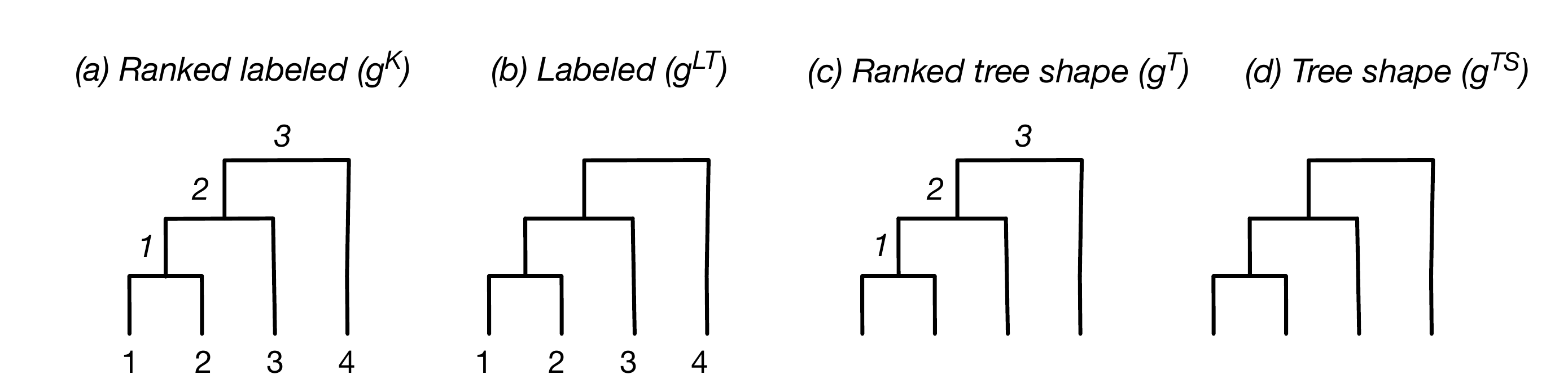}
	\caption{\small{\textbf{Tree topologies:} The \textbf{(a)} ranked labeled tree topology ($g^K$, Kingman), \textbf{(b)} labeled (unranked) tree topology ($g^{LT}$), \textbf{(c)} ranked tree shape ($g^T$, Tajima) and \textbf{(d)} tree shape ($g^{TS}$)}}
	\label{fig:tree_topologies}
\end{figure}

There are explicit or recursive formulas to compute the number of topologies with $n$ leaves.  The number of ranked labeled trees is 
$|\mathcal{G}_n^K|=n! (n-1)! /2^{n-1}$; the number of unranked labeled trees (binary phylogenetic trees) is $|\mathcal{G}_n^{LT}|=(2n-3)!!$ \citep{ste16}; 
the number of ranked tree shapes $|\mathcal{G}_n^T|$ is the $n$-th term of the Euler zig-zag sequence (alternating permutations, OEIS: A000111) \citep{di13}, and the number of tree shapes is the $n$-th Wedderburn-Etherington number (OEIS: 01190)  \citep{ste16}.  

For $n>3$, it holds that $|\mathcal{G}_n^{TS}| < |\mathcal{G}_n^{LT}|$ and $|\mathcal{G}_n^T| <|\mathcal{G}_n^K|$, that is the unlabeled tree topologies have smaller cardinalities that the labeled counterparts. For example, for $n=5$, there are $180$ ranked labeled trees and $5$ unlabeled ranked trees. Similarly, $105$ labeled trees and $3$ tree shapes. This cardinality difference has motivated the study of lower resolution coalescent processes \citep{sai15}. However, it is not clear how big this difference is when the observed data restricts the  space of topologies. In the next section, we describe how observed data imposes combinatorial constraints on the topological space.

\subsection{Perfect phylogeny and infinite sites  model}
\label{perphylo}
As mentioned in the introduction, we assume that molecular variation at a non-recombining contiguous segment of DNA (or locus) is the result of a mutation process superimposed on the timed genealogy $\mathbf{g}$ (Figure \ref{fig:coalescent}). Here, we assume that mutations (or substitutions) occur at sites that have not mutated previously. This mutation model is called the \textit{infinite-sites model} (ISM) \citep{kim69,wat75}. Further, we assume that our data consists of a single non-recombining segment of DNA. Although we will not model the mutation process explicitly, it is commonly assumed that mutation happens as Poisson process on the timed genealogy $\mathbf{g}$. However, an important consequence is that the ISM imposes a restriction on the space of tree topologies: given that at most one mutation occurs at a site, this mutation must occur on a branch subtending individuals with the observed mutation (Figure \ref{fig:coalescent}). Therefore mutations partition the observed sequences into two sets: the sequences that carry the mutations and the sequences that do not. In addition, if the ancestral type at each polymorphic site is known, molecular data from $n$ individuals at $m$ polymorphic sites can be represented as an incidence matrix $\textbf{Y}$ and a vector of the row frequencies of the matrix $\textbf{Y}$. The incidence matrix $\textbf{Y}$ is a $k \times m$ matrix with $0$-$1$ entries, where $0$ indicates the ancestral type and $1$ indicates the mutant type; $k$ is the number of unique sequences (or haplotypes) observed in the sample and the vector of frequencies indicates the number of times each haplotype is observed in the sample. For example, the $n=6$ sequences displayed at the leaves of the genealogy in Figure \ref{fig:coalescent} can be summarized as the incidence matrix and corresponding frequency vector in Figure \ref{fig:perfect_phylo}. The three haplotypes in this example are A...A...A...C, T...A...T...G and A...T...T...G with labels $h_{a},h_{b}$ and $h_{c}$ respectively. In this example, the ancestral sequence is displayed at the root of the tree in Figure \ref{fig:coalescent}. In what follows, we will assume that our data are 
an incidence matrix and corresponding frequencies as in Figure \ref{fig:perfect_phylo}. 



\citet{gus91} proposed an algorithm to represent the incidence matrix as a multifurcating tree called \textit{perfect phylogeny}. A perfect phylogeny is in bijection with an incidence matrix and it exists if and only if the infinite sites and no recombination assumptions hold. In our example, the multifurcating tree displayed in Figure \ref{fig:perfect_phylo}(a) is the corresponding perfect phylogeny representation of the incidence matrix. The key in the perfect phylogeny representation is that mutations (labeled as $s_{1},$\dots$,s_{4}$ in Figure \ref{fig:perfect_phylo}) partition the haplotypes into different groups (3 groups represented as leaf nodes in Figure \ref{fig:perfect_phylo}(a)) and thus enforce a combinatorial constraint.

\begin{figure}[!htbp]
	\includegraphics[clip, width=1\textwidth]{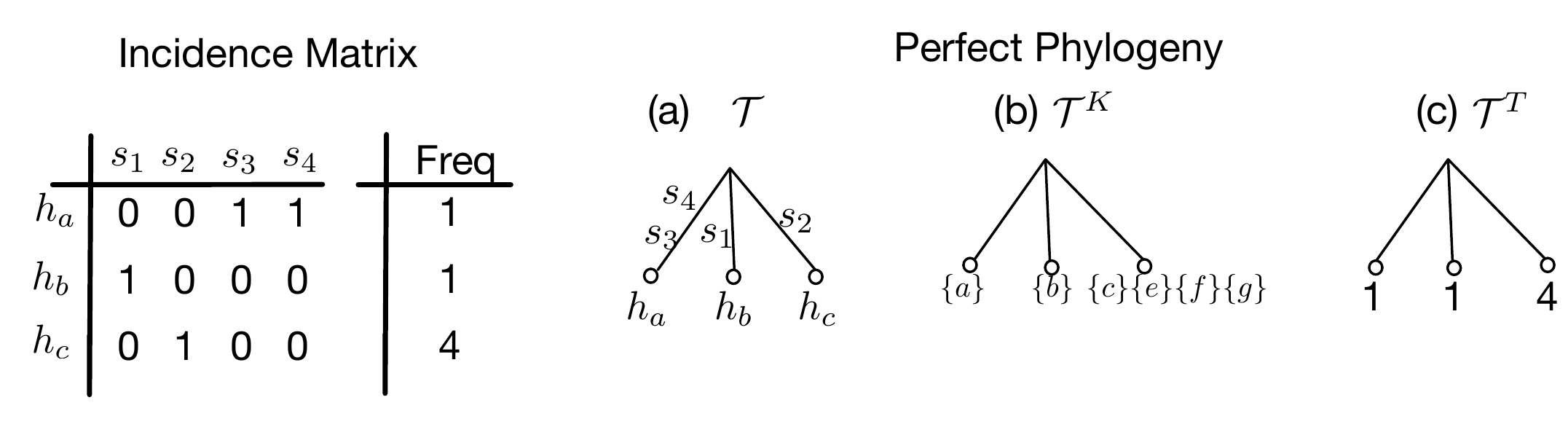}
	\caption{\small{\textbf{Incidence matrix and perfect phylogeny representation}. Data is summarized as an incidence matrix ($h$ denotes the haplotypes, $s$ the segregating sites) and a vector of frequencies. \textbf{(a)} Original perfect phylogeny $\mathcal{T}$ in bijection with the incidence matrix; each of the 4 polymorphic sites labels exactly one edge. When an edge has multiple labels, the order of the labels is irrelevant. Each of the 3 haplotypes labels one leaf if $\mathcal{T}$. \textbf{(b)} Kingman's perfect phylogeny $\mathcal{T}^{K}$:  it is a perfect phylogeny with edge labels removed and leaf labels the set of individual labels for each haplotype. \textbf{(c)} Tajima's perfect phylogeny $\mathcal{T}^{T}$:  it is a perfect phylogeny with edge labels removed and leaf labels the corresponding haplotype frequency.}}
	\label{fig:perfect_phylo}
\end{figure}


More formally, given an incidence matrix $\textbf{Y}$, a \textit{perfect phylogeny} $\mathcal{T}$ is a rooted tree (possibly multifurcating) with $k$ leaves, satisfying the following properties:
\begin{enumerate}
	\item Each of the $k$  haplotypes labels one leaf in $\mathcal{T}$.
	\item Each of the $m$ polymorphic sites labels exactly one edge. When multiple sites label the same edge, the order of the labels along the edge is arbitrary. Some external edges (edges subtending leaves) may not be labeled, indicating that they do not carry additional mutations to their parent node.
	\item For any haplotype $h_{k}$, the labels of the edges along the unique path from the root to the leaf $h_{k}$, specify all the sites where $h_{k}$ has the mutant type.
\end{enumerate}


A few remarks. The tree $\mathcal{T}$ is usually not the tree topology of a coalescent genealogy.· First, each leaf node labels a unique haplotype which could have been sampled with frequency higher than one. Second, we have restricted our attention to binary trees, those sampled from one of the coalescent processes, and $\mathcal{T}$ is not necessarily binary (in most cases it is not).

To simplify our exposition in the following sections, we summarize the perfect phylogeny somewhat different than the original Gusfield's algorithm depending on whether we wish to count Kingman's or Tajima's topologies compatible with the observed data. Our perfect phylogeny representation for counting Kingman's tree topologies is denoted by $\mathcal{T}^{K}$.  In $\mathcal{T}^{K}$, we remove the edge labels and label the leaf nodes by the set of individuals labels  that share the same haplotype. For example in Figure \ref{fig:perfect_phylo}(b) individuals $\{c\},\{d\},\{e\}$ and $\{f\}$ share the same haplotype $h_c$. In the case when a haplotype leaf descends from an edge with no mutations, we attach the set of individuals labels to its parent node and remove the leaf. Similarly, our perfect phylogeny representation for counting Tajima's tree topologies is denoted by $\mathcal{T}^{T}$. In $\mathcal{T}^{T}$, we again remove edge labels but now we label leaf nodes by the frequency of their corresponding haplotypes (Figure \ref{fig:perfect_phylo}(c)). Note that such a representation reflects the fact that two individuals sharing the same mutations are indistinguishable. In the case when a haplotype leaf descends from an edge with no mutations, we attach the frequency of the haplotype to its parent node and remove the leaf.

\begin{figure}[!htbp]
	\includegraphics[clip, width=1\textwidth]{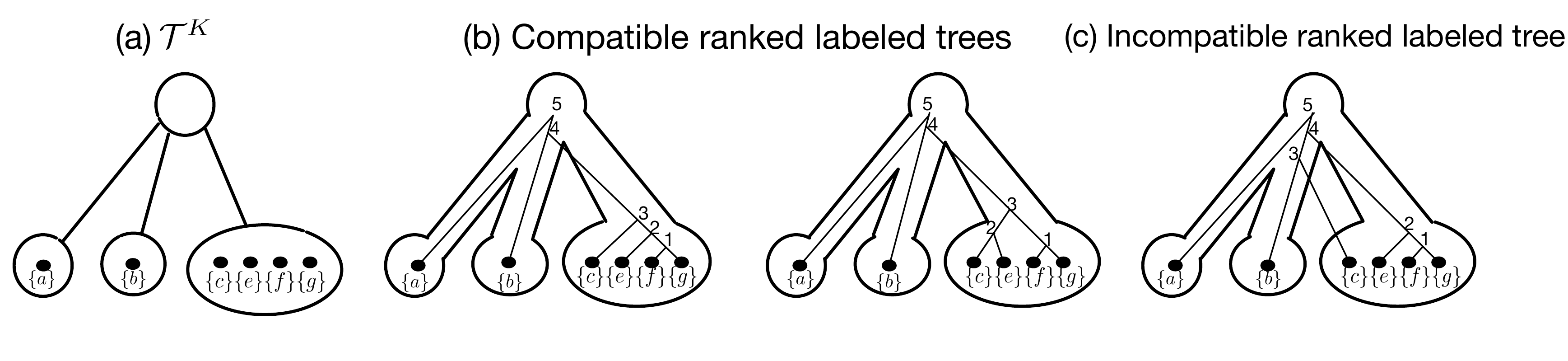}
	\caption{\small{\textbf{Compatibility of ranked labeled trees with the perfect phylogeny.} \textbf{(a)} Kingman perfect phylogeny, \textbf{(b)} two examples of ranked labeled trees compatible with the perfect phylogeny, \textbf{(c)} incompatible ranked labeled tree ($\{c\}$ should coalesce first with individuals in its node).}}
	\label{fig:compatible}
\end{figure}




A tree topology $g$ is \textit{compatible} with the perfect phylogeny $\mathcal{T}$ if $P(\mathcal{T} | g, \textbf{t}) >0$. That is, if all sequences descending from a node $V$ in $\mathcal{T}$ coalesce in $g$ before coalescing with any other sequence descending from a different node $U$ in $\mathcal{T}$. Figure \ref{fig:compatible}(b) shows examples of two compatible ranked labeled trees with the perfect phylogeny in Figure \ref{fig:perfect_phylo}(b) and \ref{fig:compatible}(a), while Figure \ref{fig:compatible}(c) shows an incompatible ranked  labeled tree topology. The topology in Figure \ref{fig:compatible} (c) is not compatible since there is no node in $g^{K}$ that groups together $\{c\},\{e\},\{f\}, \{g\}$ without $\{a\}$ or $\{b\}$.
In the following sections we describe our algorithms for approximating the number of tree topologies compatible with a given perfect phylogeny. 
In the following, we denote the set of compatible tree topologies by $\mathcal{G}_{n,C} \subseteq \mathcal{G}_{n}$.

\section{Sequential importance sampling}
\label{sis_base}

Let $p$ denote the uniform discrete distribution on $\mathcal{G}_{n,C}$. Suppose we can sample from a distribution $q$ with support $\mathcal{G}_{n,C}$, then the normalizing constant of $p$, i.e. $|\mathcal{G}_{n,C}|$ is given by 
\begin{equation}
\label{eq:count}
\E_q \left[\frac{1}{q(g)}\right]= \sum_{g \in \mathcal{G}_{n,C}} \frac{1}{q(g)} q(g) = |\mathcal{G}_{n,C}|,
\end{equation}
which, given an $i.i.d.$ sample from $q$ of size $N$, can be approximated via Monte Carlo by 
\begin{equation}
\label{eq:count_is}
\widehat{|\mathcal{G}_{n,C}|}= \frac{1}{N} \sum_{i=1}^N \frac{1}{q(g_i)},
\end{equation}
with standard error:
$se(\widehat{|\mathcal{G}_{n,C}|})=\sqrt{\Var_q( 1/q(g))}/\sqrt{N}$, and
the variance can be approximated with its empirical counterpart. 

Average \eqref{eq:count_is} is an instance of importance sampling (IS) \citep{ham64,owen13}. 
As described in previous sections, observed data impose combinatorial constraints to the space of compatible tree topologies. The idea is to construct a compatible tree topology $g \in \mathcal{G}_{n,C}$ sequentially with choices $c_{n},\ldots,c_{1}$ (one coalescence at a time) from the tips to the root, ensuring that each choice is compatible with the observed data (or perfect phylogeny) and with known probability:
\begin{equation}
\label{seq_q}
q(g)= q(c_n)  q(c_{n-1} \mid c_{n}) \ldots q(c_{1}  \mid c_{2}),
\end{equation}
Approaches with a similar stochastic sequential nature construction have been used for enumeration  in other contexts, such as random graphs, networks and contingency tables \citep{knu76, chen05,bli11,chen18, dia18}.
It is clear from this literature that the algorithm should satisfy two desiderata: it should not ``get stuck", i.e. it should not sample $g$ outside $|\mathcal{G}_{n,C}|$; in addition, $q(g)$ should be easily computed. 
%
%
How large $N$ should be largely depends on how close the proposal distribution $q$ is to the target distribution $p$. In our problem $p$ is uniform discrete on the set of compatible trees. 
\cite{chat18} show that $N \approx \exp(KL(q,p))$  is necessary and sufficient for accurate estimation by IS, where $KL$ denotes the Kullback-Leibler divergence. In addition, \cite{chat18} warn against the use of sample variance as a criteria for IS convergence: they prove that it can be arbitrary small for large $N$ independently from $p$ and $q$. 

A common metric to assess convergence is the importance sampling effective sample size $\text{ESS}$, where $\text{ESS}=N/(1+ cv^2)$, and $cv^2$ is the coefficient of variation given by
$$cv^2=\frac{\Var_q[p(g)/q(g)]}{\E^{2}_q[p(g)/q(g)]},$$
and estimated empirically. $cv^2$ is the $\chi^2$-distance between $p$ and $q$. A low $cv^2$ (ESS close to $N$), is a good indicator of the quality of the proposal $q$. 


In lieu of sample variance as a metric for convergence, \cite{chat18} define $q_N=\E [Q_N]$ where 
$$Q_N= \frac{\max_{ 1 \leq i \leq N} p(g_i)/q(g_i)}{\sum_{i=1}^N p(g_i)/q(g_i)},$$ 
and propose to use a Monte Carlo estimate of $q_N$ below a certain threshold as a criteria for convergence. A low value of $q_n$ can  be interpreted as a situation in which a sufficiently large number of samples have been collected (large denominator) to counterbalance the effect of possible ``outliers" that are sampled (large numerator). Computing a Monte Carlo estimate is computationally expensive and hence, in this work we simply compute a single running $Q_N$ and combine it with the other metrics described. 
Note that since we restrict our attention to $p$ uniform discrete, the normalizing constant cancels out in both $Q_N$ and $cv^2$, so it is possible to compute these two diagnostics.

\subsection{Sampling trees compatible with a perfect phylogeny}
\label{algo}

To generate a tree topology $g\in\mathcal{G}_{n,C}$ compatible with the observed data $\mathcal{T}$, we proceed sequentially from tips to the root in both $\mathcal{T}$ and $g$: one coalescence in $g$ and one node in $\mathcal{T}$ at a time. In every step we keep an active set of nodes of $\mathcal{T}$ in which we can sample particles to coalesce. Initially, this set includes all nodes with at least two particles. We then randomly select an active node in $\mathcal{T}$ and randomly select two particles from the selected node to coalesce in $g$. At this time, the two selected particles are replaced by a new particle in the selected node in $\mathcal{T}$. If a node in $\mathcal{T}$ has a single particle, the node is removed and its particle is transferred to its parent node. The algorithm ends when $\mathcal{T}$ has a single node with a single particle and when a complete genealogy is generated. All particles in one node must coalesce with each other before they can coalesce with any other particle.

We propose two new algorithms that share the steps just described: one for sampling ranked labeled trees (Kingman trees) and one for sampling ranked tree shapes (Tajima trees). A simple combinatorial argument allows to extend the outputs of these two algorithms to the their respective unranked counterparts. This extension should be considered by all means a byproduct of the Kingman and Tajima algorithms. It is of interest because we can obtain estimates for two other resolutions at almost no additional computational cost.

We start with some notations. We use $V$ to denote the set of nodes of the perfect phylogeny $\mathcal{T}$ and $L \subset V$ to denote the set of active nodes, i.e. nodes with at least two particles; $v$ is an element of $V$, and $\text{pa}(v)$ denotes the parent node of $v$ (if $v$ is not the root). We use the word particle to refer to individual singletons, elements of a partition of $[n]$ or vintages. Each node in $\mathcal{T}$ has either no particles or a given number of particles assigned (labeled or not). 
Given $n$ individuals, the $n-1$ iterations required to sample a tree topology are indexed in reverse order, i.e. from $n-1$ to $1$, to be consistent with the notations used in the jump chains of the $n$-coalescent. This notation allows us to keep track how many individuals have yet to coalesce.

We saw that Kingman $n$-coalescent jump chain induces a uniform distribution with support $\mathcal{G}_n^K$. Given that our target distribution for the Kingman topology is uniform on  $\mathcal{G}_{n,C}^K$, we mimic within each node the transition probability of the underlying coalescent jump chain. The active node is sampled with probability proportional to the number of assigned particles. Although Tajima's jump chain does not induce a uniform distribution on $\mathcal{G}_n^T$, it is quite close to be uniform: it is uniform across ranked tree shapes with the same number of cherries.


\subsubsection{Data constrained Kingman coalescent}
To sample a ranked labeled tree $g^{K}=\{c_{i}\}_{i=n:1}$ of $n$ individuals compatible with the observed perfect phylogeny $\mathcal{T}^{K}$, we start at $c_n=\{\{1\},\ldots,\{n\}\}$. Each leaf node of $\mathcal{T}^K$ defines a partition of $c_n$, and we use $c_n^v$ to denote the set of particles in node $v$. An exception occurs when an edge with no mutations subtends a leaf; in this case the parent node gets assigned the particles of the leaf and the leaf is removed.

The first step is to define the set $L$ as the set of nodes with at least two particles. If a node has a single particle ($|c_n^v|=1$), we transfer the particle to its parent node. Then for each iteration $i=n-1,\ldots,1$, we sample a node in $L$ with probability proportional to the number of particles in that node: at iteration $i$, the probability of choosing node $v_i \in L$ is  $q(v_i)=|c_{i+1}^{v_i}|/\sum_{j \in L} |c_{i+1}^j|$.
The transition from $c_{i+1}^{v_i}$ to $c_{i}^{v_{i}}$ consists in joining two particles of $c_{i}^{v_{i}}$ uniformly at random. If a node is not sampled, we assume $c_{i}^{v}=c_{i+1}^{v}$. This choice mimics the jump chain of a Kingman $n$-coalescent; the difference is that the Markov chain moves one step on a constrained state space: $c_i^{v_i}$ in lieu of $c_i$; \textit{i.e.} the coalescent event in node $v_i$ has probability:
\begin{equation}
\label{king_algo_move}
q(c^{v_i}_{i} \mid c^{v_i}_{i+1})=\left\{
\begin{array}{ll}
\dbinom{|c^{v_i}_{i+1}|}{2}^{-1} \,\ \,\ \,\ \,\ \text{if}  \ \,\ c^{v_i}_{i} \prec c^{v_i}_{i+1}\\
\,\ \\
0 \,\ \,\ \,\ \,\  \,\ \,\ \,  \,\ \,\  \text{otherwise}
\end{array}.
\right.
\end{equation}

\noindent Note that at every iteration $c_i =\cup_{v} c_i^v$. The two probabilities $q(v_i)$ and $q(c^{v_i}_{i} \mid c^{v_i}_{i+1})$ are all we need to compute the transition probability  
$$q(c_{i} \mid c_{i+1})=q(v_i) q(c^{v_i}_{i} \mid c^{v_i}_{i+1}),$$
where $c_i= c_{i+1} \setminus c_{i+1}^{v} \cup c_i^{v}$ can be constructed recursively. The last iteration happens at the root node of $\mathcal{T}^K$ and  $q(g^T)$ is computed as the product of the transition probabilities as in \eqref{seq_q}. 
We outline our sampling algorithm with the following example and provide the pseudocode in Algorithm \ref{alg:algo1}.

\begin{example} Consider the perfect phylogeny $\mathcal{T}^K$ in Figure \ref{fig:king_sample}(a). To avoid confusion between the nodes' sampling order $(v_{n-1}, \ldots, v_{1})$ and node labels, we label the root node $j_0$ and the leaf nodes $j_1$, $j_2$ and $j_3$. Figure \ref{fig:king_sample} gives a graphical representation of a single run of the algorithm, where one particle is assigned to $j_1$, one to $j_2$ and four to $j_3$. We start with 
	$c_6^{j_1}= \{a\}$, $c_6^{j_2}= \{b\}$,  $c_6^{j_4}= \{\{c\},\{d\},\{e\},\{f\}\}$ and $c_6^{j_0}=\emptyset$. Now, both $j_1$ and $j_2$ have a single particle: we transfer their particles to the root node and update $c_n^{j_0}=\{\{a\},\{b\}\}$ (Figure \ref{fig:king_sample}(a-b)). The set of active nodes is $L=\{j_0,j_3\}$.  
	At iteration $i=5$ (first iteration) suppose we sample node $v_5=j_3$, this happens with probability $4/6$. then $d$ and $f$coalesce with probability $1/6$ (Figure \ref{fig:king_sample}(b)). We update $c_{5}^{j_3}=\{\{c\},\{e\},\{d, f\}\}$. The set of active sample nodes remains $L=\{j_0,j_3\}$.  Figure \ref{fig:king_sample}(c-f) shows the remaining iterations.  The sequence of sampled nodes is $\{v_5=j_3,v_4=j_3,v_3=j_0,v_2=j_3,v_1=j_0\}$ with sampling probabilities $(4/6,3/5,1/2,1,1)$. The coalescent events probabilities  are $(1/6,1/3,1,1,1)$. Thus, $q(g^K)=1/90$. 
	\begin{figure}[!htbp]
		\includegraphics[clip, width=1\textwidth]{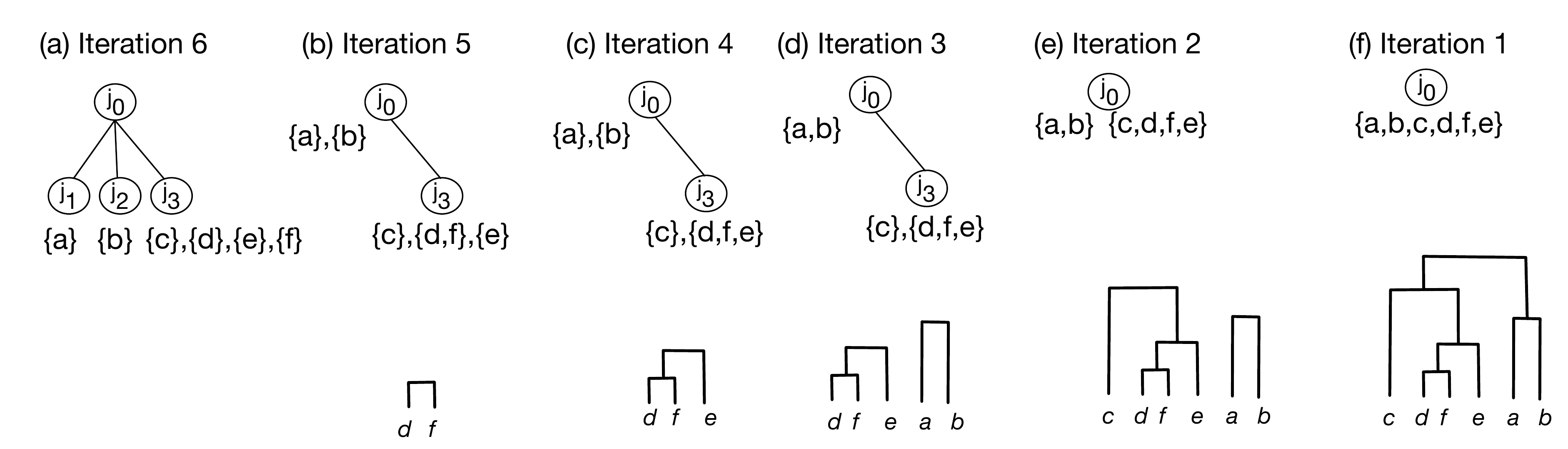}
		\caption{\textbf{Example 1: sequential sampling of a Kingman tree topology with constraints.} First row describes the steps in the perfect phylogeny; second row describes how $g^K$ is sequentially sampled. We start with a perfect phylogeny (a), in (b) we assign the singletons to their parent node. In this case $\{a\}$ and $\{b\}$ are assigned to $j_0$. At each iteration (b)-(f) we select a node and coalesce a pair from the selected node. The algorithm terminates when a single tree topology of size $n$ is generated.}
		\label{fig:king_sample}
	\end{figure}
	
\end{example}
%
\begin{algorithm} [h]
	\caption{Sequential sampling on a constrained Kingman tree topology}
	\label{alg:algo1}
	\begin{algorithmic}
		\State \textbf{Inputs:} $\mathcal{T}^K$ with $c^{v}_{n}$ subsets of singletons at all nodes with particles and $c^{v}_{n}=\emptyset$ at all remaining nodes.
		\State \textbf{Outputs:} $g^K$, $q(g^K)$
		\begin{enumerate}
			\item If a node $v$ is such that $|c_n^v|=1$, then we let $c_n^{\text{pa}(v)}=c_n^{\text{pa}(v)} \cup c_n^v$ and $c_n^v=\emptyset$.
			\item Define $L$ as the list of nodes such that  $|c_n^v|>1$
			\item Initialize $q=1$
			\For {$i=n-1$ to $1$}
			\begin{enumerate}
				\item Sample node $v_i$ in $L$ with probability $q(v_i)$.  
				\item Choose particles in $v_{i}$ to coalesce with probability $q(c_{i} \mid c_{i+1})$.
				\item Update $c^{v_{i}}_{i}$ and define $c^{v}_{i}=c^{v}_{i+1}$ for all the other nodes.
				\item If  $|c^{v_{i}}_{i}|=1$, we let $c_{i}^{\text{pa}(v_{i})}=c_{i}^{\text{pa}(v_{i})} \cup c_{i}^{v_{i}}$ and   $c^{v_{i}}_{i-1}=\emptyset$.
				\item Update $q=q\times q(v_i) \times q(c_{i} \mid c_{i+1})$ 
				\item Update $L$ as the list of nodes such that  $|c_{i}^v|>1$ 
			\end{enumerate}
			\EndFor
		\end{enumerate}
	\end{algorithmic}
\end{algorithm}
\subsubsection{Data constrained Tajima coalescent}

To sample a ranked tree shape $g^{T}=\{(\alpha_{i},\beta_{i})\}_{i=n:1}$ of $n$ individuals compatible with the observed perfect phylogeny $\mathcal{T}^T$ (Figure \ref{fig:perfect_phylo} (c)), we start at $(n,\emptyset)$ and each leaf node in the perfect phylogeny $\mathcal{T}^T$ is assigned a vector  $(\alpha^v_{n},\beta^v_{n})$. Again, an exception occurs when an edge with no mutations subtends a leaf. In this case, we assign the particles to their parent node and remove the leaf node. Recall that $\alpha^{v}_{n}$ denotes the number of singletons, and $\beta^v_{n}$ denotes the set of vintages associated to node $v$. Initially, each leaf node (possibly also some internal nodes when the no-mutations case occurs) in the perfect phylogeny contains the number of singleton particles $ \sum_{v \in V} \alpha^v_{n}=n$, and no vintages, i.e. $\beta^v_n= \emptyset$ for all $v \in V$. At any given iteration $i$, the number of particles associated to a node $v$ is  $\alpha^{v}_{i}+|\beta^v_{i}|$.

Tajima's sampler follows the rationale used to build the Kingman sampler. We define the set $L$ as in the Kingman's sampler (nodes with at least two particles). Then for $n-1$ iterations, we first sample a node $v \in L$ with probability $q(v_i)=(\alpha^{v}_{i}+|\beta^v_{i}|) /\sum_{j \in L}(\alpha^{j}_{i}+|\beta^j_{i}|)$; then we sample a pair of particles in the selected node to coalesce. Our proposal  probability is 
is: 

\begin{equation}
\label{eq:taj_algo1}
q\Big[(\alpha^{v_i}_{i},\beta^{v_i}_{i})\Big| (\alpha^{v_i}_{i+1},\beta^{v_i}_{i+1})\Big]=\left\{
\begin{array}{ll}
\dbinom{\alpha^{v_i}_{i+1}}{\alpha^{v_i}_{i+1}-\alpha^{v_i}_{i}}\dbinom{\alpha^{v_i}_{i+1}+|\beta^{v_i}_{i+1}|}{2}^{-1} \,\ \,\ \,\ \,\ \text{if} \ \,\  (\alpha^{v_i}_{i},\beta^{v_i}_{i}) \prec  (\alpha^{v_i}_{i+1},\beta^{v_i}_{i+1})\\
\,\ \\
0 \,\ \,\ \,\ \,\  \,\ \,\ \,  \,\ \,\  \text{otherwise}
\end{array}
\right.
\end{equation}

Analogously to the Kingman sampler, each iteration ends by updating $(\alpha^{v_i}_{i},\beta^{v_i}_{i})$ and $L$. The pseudocode is presented in Algorithm 2.  Note that, as opposed to the Kingman sampler,  $q(v_i)$ and $q\Big[(\alpha^{v_i}_{i},\beta^{v_i}_{i})\Big| (\alpha^{v_i}_{i+1},\beta^{v_i}_{i+1})\Big]$ in Tajima sampling, do not fully determine $q\Big[(\alpha_{i},\beta_{i})\Big| (\alpha_{i+1},\beta_{i+1})\Big]$, where $(\alpha_{i},\beta_{i})$ is the $i$th state independent of which node is selected, it can be computed as $(\alpha_{i},\beta_{i})=(\sum_{v \in V} \alpha^v_{i}, \bigcup_{v \in V} \beta^v_{i})$. A transition from $(\alpha_{i+1},\beta_{i+1})$ to $(\alpha_{i},\beta_{i})$ can be obtained by sampling different nodes in the active set, possibly with different sampling probabilities.  For example, suppose we are joining two singletons: any $v \in L$ with at least two singletons allows this type of transition. 
This issue was not relevant in the Kingman sampler because individuals were labeled. 
Therefore, the output of the sampling algorithm after $n-1$ iterations is $\{(\alpha_{i},\beta_{i})\}_{i=n:1}=g^T$ along with the sequence of sampling nodes $\bm{v}=(v_{n-1}, \ldots, v_1)$. 
It is possible to sample the same $g^T$ with different $\bm{v}$ and $\bm{v'}$. 
These two outputs of the algorithm, which we denote by $(g^T, \bm{v})$ and $(g^T, \bm{v'})$, may also have different sampling probabilities $q(g^T, \bm{v})$ and $q(g^T, \bm{v'})$. We illustrate this situation with the following example.
\begin{figure}[!htbp]
\center{\includegraphics[width=0.7\textwidth]{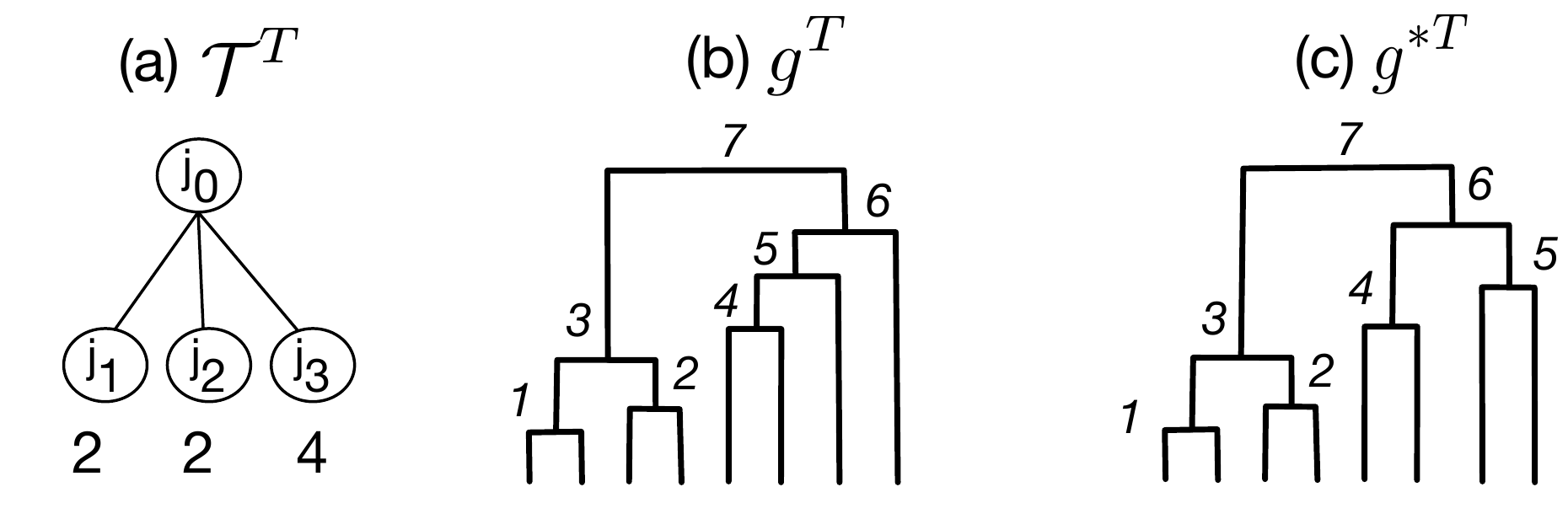}}
	\caption{\textbf{Example 2: two ranked tree shapes compatible with a given perfect phylogeny.}  (a) perfect phylogeny (b)-(c) two possible ranked tree shapes compatible with $\mathcal{T}^T$. Tree (b) can be sampled through two node orderings $\bm{v}=\{ j_1,j_2,j_0,j_3,j_3,j_3,j_0 \}$ and $\bm{v'}=\{ j_2,j_1,j_0,j_3,j_3,j_3,j_0 \}$, tree (c) through four orderings $\bm{v}$, $\bm{v'}$, $\bm{v''}=\{ j_3,j_3,j_3,j_1,j_2,j_0,j_0 \}$ and $\bm{v'''}=\{ j_3,j_3,j_3,j_2,j_1,j_0,j_0 \}$.}
	\label{fig:taj_sample}
\end{figure}

\begin{example}
	Consider the perfect phylogeny in Figure \ref{fig:taj_sample} (a).  Figure \ref{fig:taj_sample} (b)-(c) show two ranked tree shapes, $g^T$ and $g^{*T}$, that can be sampled with our algorithm. Let us first consider $g^T$ in Figure \ref{fig:taj_sample}(b). A possible sequence of sampling nodes in $\mathcal{T}^{T}$ is $\bm{v}=\{ j_1,j_2,j_0,j_3,j_3,j_3,j_0 \}$. In this case the output of Algorithm 2 would be $(g^T,\bm{v})$. Although, the sequence $\bm{v'}=\{ j_2,j_1,j_0,j_3,j_3,j_3,j_0 \}$ leads also to the same $g^T$. The two node orderings $\bm{v}$ and  $\bm{v'}$ can be easily identified in $\mathcal{T}^T$ since nodes $j_1$ and $j_2$ are indistinguishable by being siblings of the same size. 
	Let us now turn to $g^{*T}$ in Figure \ref{fig:taj_sample}(c). In this case, there are 4 possible sampling nodes orderings:  $\bm{v}$, $\bm{v'}$, $\bm{v''}=\{ j_3,j_3,j_3,j_1,j_2,j_0,j_0 \}$ and $\bm{v'''}=\{ j_3,j_3,j_3,j_2,j_1,j_0,j_0 \}$.
\end{example}

We now introduce some notation to distinguish between the output of our sampling algorithm and the elements needed in the sequential importance sampling estimation of $|\mathcal{G}^{T}_{n,c}|$.
\begin{definition}
	Let  $\mathcal{Y}_{n,C}^T$ be the set of all possible outcomes $(g^T, \bm{v})$ of the Tajima algorithm (Algorithm 2) conditionally on a given perfect phylogeny $\mathcal{T}^{T}.$
	We call two outputs of the algorithm: $(g^T, \bm{v})$ and $(g^T, \bm{v'})$ equivalent if they have the same ranked tree shape $g^T$. Define $c^T(g^T)$  be the size of the equivalence class, that is the number of possible pairs $(g^T, \bm{v'}) \in \mathcal{Y}_{n,C}^T$ equivalent to $(g^T, \bm{v}) $. 
\end{definition}



It is still possible to use sequential importance sampling despite the fact that our proposal $q$ has support $\mathcal{Y}_{n,C}^T$ instead of $\mathcal{G}_{n,C}^T$. We discuss two alternative ways. The first one is to generate a sample $(g^{T},\bm{v}) \in \mathcal{Y}_{n,C}^T$ with sampling probability  $q(g^{T},\bm{v})$ computed as the product of all  transition probabilities (Algorithm 2). We then call a backtracking algorithm that lists all possible sequence of nodes $\bm{v'}$ that would give rise to the same $g^{T}$ and compute: 
\begin{equation}
\label{equi_class}
q(g^T)=\sum_{\bm{v'}:(g^T,\bm{v'}) \in \mathcal{Y}_{n,C}^T} q(g^T,\bm{v'}).
\end{equation}
Finally, we estimate the cardinality of our constrained space by the Monte Carlo approximation to the following:
\begin{align}
\label{eq:tajima_sis}
E_{\mathcal{Y}^{T}_{n,C}}\left[\frac{1}{q(g^{T})}\right]&=  \sum_{(g^{T},\bm{v})\in\mathcal{Y}^{T}_{n,C} } \frac{q(g^{T},\bm{v})}{q(g^{T})} =\sum_{g^{T}\in \mathcal{G}^{T}_{n,C}}\frac{1}{q(g^{T})} \sum_{\bm{v}:(g^{T},\bm{v})\in\mathcal{Y}^{T}_{n,C} }q(g^{T},\bm{v}) \nonumber \\
&=\sum_{g^{T}\in \mathcal{G}^{T}_{n,C}}\frac{q(g^{T})}{q(g^{T})}\nonumber =|\mathcal{G}^{T}_{n,C}| \nonumber\\
\end{align}

Note that the backtracking algorithm adds a computational burden to the procedure.  The complexity cannot be uniquely determined and, as it is known in the backtracking literature, may vary largely from problem to problem \citep{knu18}. Since the complexity depends both on the data $\mathcal{T}^T$ and $g^T$, an analytical expression is not available. We will study this computational burden through simulations in Section \ref{simu}.

An alternative to the backtracking step is desirable but currently still an open problem. A potential alternative is inspired by a similar situation discussed in \cite{bli11} in the context of sampling graphs with a given degree sequence. The cardinality is estimated by the Monte Carlo approximation to the following \begin{align}
\label{eq:tajima_sis2}
E_{\mathcal{Y}^{T}_{n,C}}\left[\frac{1}{c^{T}(g^{T})q(g^{T},\bm{v})}\right]&=  \sum_{(g^{T},\bm{v})\in\mathcal{Y}^{T}_{n,C} } \frac{q(g^{T},\bm{v})}{c^{T}(g^{T})q(g^{T},\bm{v})} \nonumber \\
&=\sum_{g^{T}\in \mathcal{G}^{T}_{n,C}}\frac{1}{c^{T}(g^{T})} \sum_{\bm{v}:(g^{T},\bm{v})\in\mathcal{Y}^{T}_{n,C} }1 =|\mathcal{G}^{T}_{n,C}| ,\nonumber\\
\end{align}
where $c^T(g^T)$ is the size of the equivalence class $
c^{T}(g^T)=\#\{\bm{v'}:(g^T,\bm{v'}) \in \mathcal{Y}_{n,C}^T\}$ as in Definition 3. Given a pair $(g^{T},\bm{v})$, we can calculate $c^T(g^T)$ by finding equivalence classes of certain subtrees in $g^{T}$ relative to $\mathcal{T}^{T}$. Although  a practical implementation is computationally prohibitive, we introduce this idea because in the next section we use it to obtain unranked tree topologies (labeled trees and tree shapes) algorithms as a byproduct of the Kingman and Tajima algorithms.

\begin{algorithm} [h]
	\caption{Sampling  on the constrained Tajima Space}
	\begin{algorithmic}\label{algo:algo2}
		\State \textbf{Inputs:} $\mathcal{T}^T$, with $\alpha_{n}^v$ number of singletons at all leaf nodes (plus the parent node if the leaf subtends from a branch with no mutations), and $\beta_{n}^v=\emptyset$ for all $v \in V$.
		\State \textbf{Outputs:} $g^T$, $q(g^T)$
		\begin{enumerate}
			\item If a node $v$ is such that $\alpha_{n}^v=1$, then let $\alpha_{n}^{pa(v)}=\alpha_{n}^{pa(v)}+1$, $\alpha_{n}^v=0$.
			\item Define $L$ as the list of nodes with  $\alpha_{n}^v>1$.
			\For {$i=n-1$ to $1$}
			\begin{enumerate}
				\item Sample node $v_i$ with probability $q(v_i)$.  
				\item Choose particles to coalesce with probability $q\Big[(\alpha^{v_i}_{i},\beta^{v_i}_{i})\Big| (\alpha^{v_i}_{i+1},\beta^{v_i}_{i+1})\Big]$
				\item Update $(\alpha^{v_i}_{i},\beta^{v_i}_{i})$ and define $(\alpha^{v_i}_{i},\beta^{v_i}_{i})=(\alpha^{v_i}_{i+1},\beta^{v_i}_{i+1})$ for all other nodes
				\item If $\alpha^{v_{i}}_{i}+|\beta^{v_{i}}_{i}|=1$, then let $\alpha_{i}^{pa(v_{i})}=\alpha_{i}^{pa(v_{i})}+\alpha_{i}^{v_{i}}$, $\alpha_{i}^{v_{i}}=0$, and $\beta_{i}^{pa(v_{i})}=\beta_{i}^{pa(v_{i})}\cup\beta_{i}^{v_{i}}$, $\beta_{i}^{v_{i}}=\emptyset$
				\item Update $q=q\times q(v_i)\times q\Big[(\alpha^{v_i}_{i},\beta^{v_i}_{i})\Big| (\alpha^{v_i}_{i+1},\beta^{v_i}_{i+1})\Big]$
				\item Update $L$ as the list of nodes such that $\alpha_i^v+ |\beta^v_i|>1$.
			\end{enumerate}
			\EndFor
			\item  Compute all possible node paths $\textbf{v}$ that lead to $g^T$ (backtracking algorithm).
			\item Compute  $q(g^T)$ as in \eqref{equi_class}. 
		\end{enumerate}
		
	\end{algorithmic}
\end{algorithm}

\subsubsection{Labeled trees and tree shapes}
\label{LTTS}
We now turn to the unranked versions:  
labeled trees and tree shapes.
As before, we define equivalence relations that partitions the spaces $\mathcal{G}_{n,C}^K$ and $\mathcal{G}_{n,C}^T$ into equivalence classes that ignore rankings. We show two simple formulas to compute the size of these classes. As opposed to ranked tree shapes, these formulas are easy to implement and allow to build a SIS procedure to estimate $|\mathcal{G}_{n,C}^{LT}|$ and $|\mathcal{G}_{n,C}^{TS}|$ using outputs from the Kingman and Tajima algorithms (Algorithm 1 and 2). First we define the following two equivalence relations and their cardinalities. 

\begin{definition}
	For any element $g^K \in \mathcal{G}_{n,C}^K$, let $LT(g^K)$ denote the corresponding unranked labeled tree $g^{LT} \in \mathcal{G}_{n,C}^{LT}$, obtained by removing the rankings from internal nodes of $g^K$. We call $g^K$ and $g'^K$ equivalent if $LT (g^K)=LT(g'^K)$ and we denote the size of the equivalence class by $c^{LT}(g^K)$.
\end{definition}

\begin{proposition} \label{gk}
	Let $g^K \in \mathcal{G}_{n,C}^K$, and let $g^K_{i,1}$ and $g^K_{i,2}$ be the two subtrees (or clades) that merge at the $i$th  coalescent event for $i=1,\ldots,n-1$. Then
	$$c^{LT}(g^K)=\prod_{i=1}^{n-1} \frac{(|g^K_{i,1}|+ |g^K_{i,2}|-2)!}{(|g^K_{i,1}|-1)! (|g^K_{i,2}|-1)!},$$
	where $|g^K_{i,j}|$ denotes the number of leaf nodes of $g^K_{i,j}$.
\end{proposition}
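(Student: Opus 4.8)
The plan is to count, for a fixed unranked labeled tree $g^{LT}$, how many distinct rankings of its internal nodes produce a valid element of $\mathcal{G}_{n,C}^K$ — that is, a ranked labeled tree still compatible with the perfect phylogeny. The key observation is that since all the ranked trees in the equivalence class share the same underlying unranked shape, and since compatibility with $\mathcal{T}^K$ only constrains which \emph{subtrees} are formed (not the order in which they are formed, beyond the requirement that a node's two child-clades be fully assembled before they merge), the number of valid rankings depends only on the shape of $g^{LT}$ and not on the constraint. So I would first argue that $c^{LT}(g^K)$ equals the number of linear extensions of the partial order on internal nodes induced by the tree (the ancestor-descendant order), which is a classical quantity, and then show this equals the stated product.

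The cleanest route is induction on $n$, processing coalescent events from the root down, or equivalently a direct hook-length-type argument. Consider the root coalescent event, which merges two clades $g^K_{n-1,1}$ and $g^K_{n-1,2}$ with $|g^K_{n-1,1}|+|g^K_{n-1,2}| = n$ leaves. Any ranking of the full tree restricts to a ranking of each of the two subtrees; conversely, given rankings of the two subtrees (having $|g^K_{n-1,1}|-1$ and $|g^K_{n-1,2}|-1$ internal nodes respectively), the number of ways to interleave them into a single total order on all $n-2$ non-root internal nodes is the binomial coefficient $\binom{(|g^K_{n-1,1}|-1)+(|g^K_{n-1,2}|-1)}{|g^K_{n-1,1}|-1} = \binom{|g^K_{n-1,1}|+|g^K_{n-1,2}|-2}{|g^K_{n-1,1}|-1}$, and the root node itself is forced to be last. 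This gives the recursion
\begin{equation}
c^{LT}(g^K) = \binom{|g^K_{n-1,1}|+|g^K_{n-1,2}|-2}{|g^K_{n-1,1}|-1}\, c^{LT}(g^K_{n-1,1})\, c^{LT}(g^K_{n-1,2}),
\end{equation}
and unfolding the recursion over all internal nodes, with the base case that a single-leaf subtree contributes a factor of $1$, yields exactly the claimed product $\prod_{i=1}^{n-1} \frac{(|g^K_{i,1}|+|g^K_{i,2}|-2)!}{(|g^K_{i,1}|-1)!\,(|g^K_{i,2}|-1)!}$.

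The one subtlety I need to handle carefully is that the equivalence class lives inside $\mathcal{G}_{n,C}^K$, not all of $\mathcal{G}_n^K$, so I must verify that every ranking counted above actually yields a \emph{compatible} tree. This follows because compatibility, as defined via the perfect phylogeny, is a property purely of the set of clades appearing in the tree: a tree is compatible iff for every node $V$ of $\mathcal{T}^K$, all individuals below $V$ form a clade of $g$ that is a subset of every clade containing any individual outside $V$ — a condition on the unranked shape alone. Hence if $g^K$ is compatible, so is every re-ranking of it, and the equivalence class is entirely contained in $\mathcal{G}_{n,C}^K$; the count is therefore unaffected by the constraint. The main obstacle is really just making this "compatibility is ranking-independent" claim airtight and cleanly reducing the ranking count to the interleaving recursion; the combinatorial identity itself is then routine.
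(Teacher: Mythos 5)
Your proposal is correct and follows essentially the same argument as the paper: at each coalescent event the orderings of the two merging clades are interleaved, giving the binomial factor $\binom{|g^K_{i,1}|+|g^K_{i,2}|-2}{|g^K_{i,1}|-1}$, and taking the product over all internal nodes counts all rankings consistent with the unranked labeled tree. Your additional verification that compatibility with $\mathcal{T}^K$ depends only on the clades of $g$, so that every re-ranking of a compatible tree remains in $\mathcal{G}_{n,C}^K$, is left implicit in the paper's proof but is a worthwhile point to make explicit.
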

\begin{proof} Note that $|g^K_{i,j}|-1$ is the number of coalescent events in subtree $g^K_{i,j}$. For each fixed $i$, we are computing the number of possible permutations of $(|g^K_{i,1}|+ |g^K_{i,2}|-2)$ coalescent events of elements of two groups with  $|g^K_{i,1}|-1$ and $|g^K_{i,2}|-1$ elements respectively. The product accounts for all possible orderings. 
\end{proof}

\begin{definition}
	For any element $g^T \in \mathcal{G}_{n,C}^T$, let $TS(g^T)$ denote the corresponding (unranked) tree shape $g^{TS} \in \mathcal{G}_{n,C}^{TS}$, obtained by removing the rankings from internal nodes of $g^T$. We call $g^T$ and $g'^T$ equivalent if $TS (g^T)=TS(g'^T)$ and we denote the size of the equivalence class by $c^{TS}(g^T)$.
\end{definition}

\begin{proposition} \label{gT}
	Let $g^T \in \mathcal{G}_{n,C}^T$, and let $g^T_{i,1}$ and $g^T_{i,2}$ be the two subtrees (or clades) that merge at the $i$th coalescent event, then
	$$c^{TS}(g^T)=\prod_{i =1 }^{n-1} \frac{(|g^T_{i,1}|+ |g^T_{i,2}|-2)!}{(|g^T_{i,1}|-1)! (|g^T_{i,2}|-1)!}\left(\frac{1}{2}\right)^{1\{g^{T}_{i,1}=g^{T}_{i,2}\}},$$
	where $|g^T_{i,j}|$ denotes the number of leaf nodes of $g^T_{i,j}$.
\end{proposition}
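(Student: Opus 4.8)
The plan is to establish the identity by the same interleaving argument that proves Proposition~\ref{gk}, corrected for the over-counting that occurs because, unlike a labeled tree, the two subtrees merging at a coalescent event of a tree \emph{shape} may be indistinguishable. Write $S=TS(g^{T})$ for the underlying tree shape and, for any tree shape $R$, let $N(R)$ be the number of ranked tree shapes whose underlying shape is $R$; the assertion is that $N(S)$ equals the displayed product. Compatibility can be ignored throughout: a tree is compatible with $\mathcal{T}^{T}$ exactly when the leaves below each node of $\mathcal{T}^{T}$ form a clade of the tree, which is a condition on the unranked shape alone, so $N(S)$ is also the size of the equivalence class of $g^{T}$ inside $\mathcal{G}^{T}_{n,C}$.

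I would argue by strong induction on $n$, which amounts to unrolling the product over coalescent events from the root down. Let the root of $S$ split into subtrees $S_{1}$ and $S_{2}$ with $|S_{j}|=n_{j}$. A ranked tree shape with underlying shape $S$ is encoded by a triple: a ranked tree shape on $S_{1}$, a ranked tree shape on $S_{2}$, and a choice of which of the remaining $n-2$ rank slots go to the $n_{1}-1$ coalescent events of $S_{1}$ versus the $n_{2}-1$ events of $S_{2}$ (the root coalescence being necessarily the oldest), and there are $\binom{n_{1}+n_{2}-2}{n_{1}-1}$ such interleavings. If $S_{1}\not\cong S_{2}$ this encoding is a bijection, so $N(S)=\binom{n_{1}+n_{2}-2}{n_{1}-1}N(S_{1})N(S_{2})$, and the induction hypothesis closes this case as in Proposition~\ref{gk}. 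If $S_{1}\cong S_{2}$ the encoding is two-to-one: a triple and its ``swap'' (complementary interleaving, with $\rho_{1}$ and $\rho_{2}$ exchanged) yield the same ranked tree shape, through the automorphism of $S$ that interchanges $S_{1}$ and $S_{2}$. A triple coincides with its swap only if the interleaving equals its complement and $\rho_{1}=\rho_{2}$; the first forces $n_{1}-1=n_{2}-1=0$, i.e.\ $S_{1}=S_{2}$ is a single leaf (the root is a cherry), and otherwise every orbit has size two. Hence $N(S)=\tfrac{1}{2}\binom{n_{1}+n_{2}-2}{n_{1}-1}N(S_{1})N(S_{2})$ when $S_{1}\cong S_{2}$ are both non-trivial, whereas at a root cherry the encoding is a bijection again (consistently, $\binom{0}{0}=1=N(\text{cherry})$). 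Unrolling over all internal nodes gives the stated product, with the factor $\left(\tfrac{1}{2}\right)^{1\{g^{T}_{i,1}=g^{T}_{i,2}\}}$ recording precisely the coalescent events whose two clades are isomorphic — the cherry events contributing only the harmless factor $\binom{0}{0}=1$.

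The main obstacle is the bookkeeping in the symmetric case: one has to verify that the swap is the \emph{only} coincidence — no other automorphism of $S$ can identify two distinct encodings, since such a map would have to send an internal node to one of a different rank — and that the unique fixed point of the swap is exactly a root cherry, so that the division by two is applied at every isomorphic-sibling coalescent event \emph{except} cherries. As an independent check I would also count ranked labeled trees with shape $S$ in two ways: as $\bigl(n!/|\mathrm{Aut}(S)|\bigr)\,c^{LT}(g^{K})$ using Proposition~\ref{gk}, and as $N(S)\cdot\bigl(n!/2^{\#\mathrm{cherries}(S)}\bigr)$ since every ranked tree shape with shape $S$ has $2^{\#\mathrm{cherries}(S)}$ leaf-labelings; equating the two and using $|\mathrm{Aut}(S)|=2^{\#\{v:\,v_{1}\cong v_{2}\}}$ recovers the same product, because cherries are exactly the isomorphic-sibling nodes absent from the final exponent.
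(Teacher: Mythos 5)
Your argument is correct and follows essentially the same route as the paper's one-line proof of this proposition (and of Proposition~\ref{gk}): count rank interleavings of the two merging clades at each coalescent event, then halve whenever the two clades are indistinguishable, because swapping their rank assignments produces the same ranked tree shape. Your version is, however, substantially more careful on the one point the paper glosses over: you verify that the ``swap'' map is exactly two-to-one except when both clades are single leaves, so the factor $\tfrac12$ must be applied at merges of isomorphic \emph{non-trivial} clades and not at cherries. That is indeed the only reading under which the displayed formula is true --- if the indicator $1\{g^{T}_{i,1}=g^{T}_{i,2}\}$ were taken literally so that two leaves count as equal, the formula would return non-integer values (e.g.\ $\tfrac14$ instead of $1$ for the balanced four-leaf shape), whereas your derivation gives the correct count $\#LE(S)/2^{s'}$ with $s'$ the number of internal nodes whose two child subtrees are isomorphic and contain at least one internal node. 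Your preliminary remark that compatibility with $\mathcal{T}^{T}$ depends only on the unranked shape (so the equivalence class inside $\mathcal{G}^{T}_{n,C}$ has the same size as in the unconstrained space) and your cross-check against ranked labeled trees via $|\mathrm{Aut}(S)|=2^{\#\{v:\,v_{1}\cong v_{2}\}}$ and the $2^{\#\mathrm{cherries}}$ leaf-labelings of a ranked tree shape are both correct and are welcome additions that the paper leaves implicit.
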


\begin{proof}
	Again, the formula is a product of permutations with repetitions. If the two subtrees that merge at the $i$th coalescence are equal, we need to divided by two since the same rankings in the two subtrees are indistinguishable. 
\end{proof}

Given $c^{LT}(g^K)$ and $c^{TS}(g^T)$, we can easily compute $q(g^{LT})=c^{LT}(g^K) q(g^K)$ and $q(g^{TS})=c^{TS}(g^T) q(g^T)$. These two distributions constitute our sampling proposal  in SIS procedure to estimate $|\mathcal{G}_{n,C}^{LT}|$ and $|\mathcal{G}_{n,C}^{TS}|$.

\section{Simulations} \label{simu} We rely on simulations to assess the convergence, empirical accuracy and computational performance of the proposed algorithms. We discuss a range of scenarios designed to capture a variety of settings encountered in applications and to highlight the key properties of the algorithms. All the algorithms are implemented in the R package \texttt{phylodyn}, which is available for download at \texttt{https://github.com/JuliaPalacios/phylodyn.} The four tree topologies analyzed are $\mathcal{G}_{n,C}^K$: Kingman ranked labeled trees, $\mathcal{G}_{n,C}^T$: Tajima ranked tree shapes,  $\mathcal{G}_{n,C}^{TS}$: tree shapes and $\mathcal{G}_{n,C}^{LT}$: unranked labeled trees. All of which are compatible with the simulated dataset.

We encode our simulated molecular data as an $n \times m$ incidence matrix $\textbf{Y}$ with $n$ sequences and $m$ polymorphic sites. $\textbf{Y}$ is sampled in three steps: we first simulate a Kingman genealogy of $n$ individuals; then the number of mutations $m$ is Poisson distributed, that is
$$(g^K,\textbf{t})\sim \text{Kingman \textit{n}-coalescent},$$
$$M \sim \text{Poisson($\mu$L)}, \,\ \,\ \,\ L=\sum^{n}_{k=2}kt_{k},$$
with $\mu$ denoting the mutation parameter, $L$ the tree length, and \textbf{t} is the $(n-1)$-vector of coalescent times. Coalescent times are exponentially distributed with rate $\binom{k}{2}$ (assuming constant population size). An open source implementation to simulate Kingman genealogies in  \texttt{R} is  \texttt{ape:rcoal()}\citep{par04}. The $m$ mutations are then placed uniformly at random along the branches of the timed genealogy $(g^K,\textbf{t})$ and labeled $1,\ldots,m$. The matrix $\textbf{Y}$ is constructed by setting the $(i,j)$th entry equal to 1 if the branch path from leaf $i$ to the root has labeled mutation $j$. This part of the simulation algorithm corresponds to the infinite-sites mutation model. Finally, $\textbf{Y}$ is summarized by its unique haplotypes (rows) and a vector of haplotype frequencies (incidence matrix in Figure \ref{fig:perfect_phylo}). The corresponding perfect phylogeny $\mathcal{T}$ (Figure \ref{fig:perfect_phylo} (a)) is constructed via \cite{gus91} algorithm;  the Kingman's perfect phylogeny $\mathcal{T}^K$ and the Tajima's perfect phylogeny $\mathcal{T}^T$ are constructed from $\mathcal{T}$ as described in Section \ref{perphylo}.

To assess convergence of our algorithms at various sample sizes and with different combinatorial constraints (defined by the pattern of mutations), we simulate incidence matrices under four scenarios: with sample sizes $n \in (10,20)$, and two mutation regimes $\mu \in (5, 20)$. We computed SIS estimates and diagnostics after $N$ number of iterations with $N \in  (100, 500,  1000, 3000, 5000, 10000,15000)$ from 20 repetitions on each of the four incidence matrices. 

Figure \ref{conv} shows the estimated cardinalities (grey lines) of the four topological spaces (rows) and for the four combinations of $n$  and $\mu$ (columns). Black lines depict the mean estimates. Figure \ref{se/count} plots the ratio of the standard error to the estimated counts (relative SE, rSE) averaged over the 20 SIS runs for each coalescent resolution (distinct colors) and the four simulation scenarios (distinct panels).  The relevant information in Figure \ref{se/count} is not the decay of the lines, which is expected, but rather the order of magnitude of the rSE values when comparing the values across the four algorithms (a lower value means a better empirical performance). Table 1 reports the $cv^2$ values for the four algorithms (rows) and the four combinations of $\mu$ and $n$ (columns). 

\begin{figure}[!htbp]
	\includegraphics[width=\textwidth]{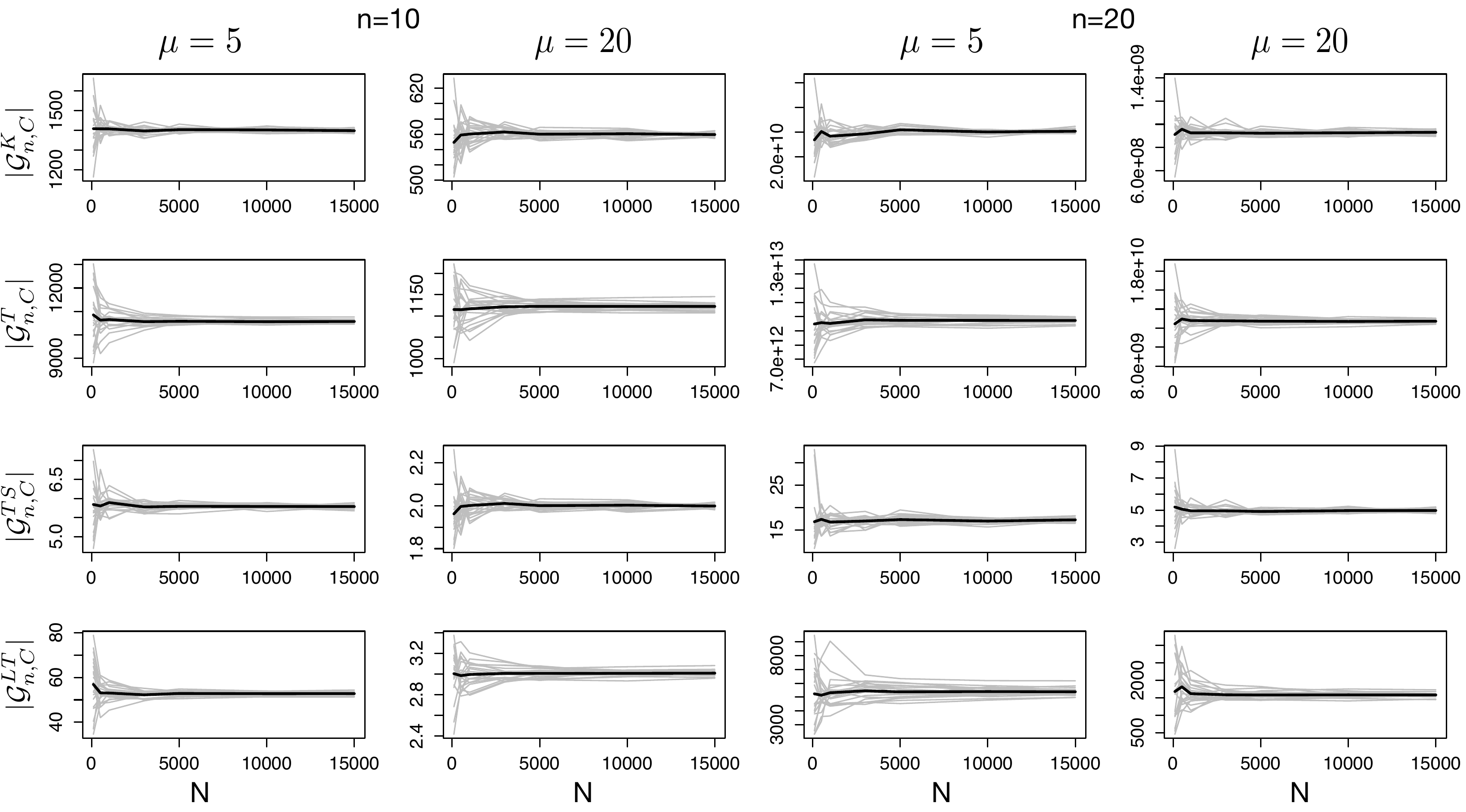}
	\caption{\small{\textbf{Simulations: sequential importance sampling count estimates.} Rows show the estimated cardinality of the four tree topologies: ranked tree shapes ($|\mathcal{G}_{n,C}^{T}|$), ranked labeled trees ($|\mathcal{G}_{n,C}^{K}|$), unranked tree shapes ($|\mathcal{G}_{n,C}^{TS}|$) and labeled trees ($|\mathcal{G}_{n,C}^{LT}|$) (top to bottom rows), the first two columns correspond to simulations based on $n=10$ samples and the last two columns on $n=20$ samples. The first and third columns correspond to mutation rate $\mu=5$ and second and fourth to $\mu=20$. Grey lines correspond to each of the 20 independent estimates from the 20 repetitions of the SIS algorithm computed at $N\in  (100, 500, 1000, 3000, 5000, 10000,15000)$ iterations. Black lines show the mean estimate of the 20 repetitions.}}
	\label{conv}
\end{figure}

\begin{figure}[!htbp]
	\includegraphics[width=\textwidth]{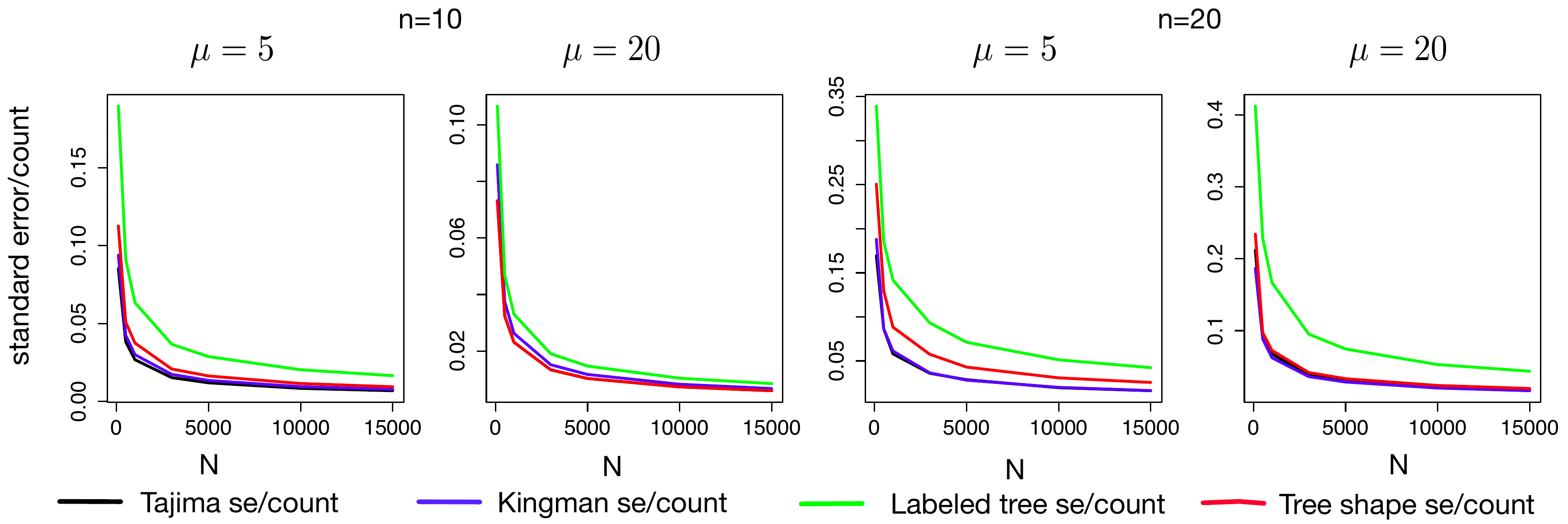}
	\caption{\small{\textbf{Simulations: ratio of standard error to approximate count.} The lines correspond to the four topologies: ranked tree shapes (black), ranked labeled trees (blue), unranked tree shapes (red) and labeled trees (green) respectively. The first two columns correspond to simulations based on $n=10$ samples and the last two columns on $n=20$ samples. The first and third columns correspond to mutation rate $\mu=5$ and second and fourth correspond to  $\mu=20$. Lines plot the ratio of the standard error to the average estimated count  over the 20 repetitions of the SIS algorithm computed at $N\in  (100, 500, 1000, 3000, 5000, 10000,15000)$ iterations. Note that black and blue lines practically overlap.}}
	\label{se/count}
\end{figure}

Figure \ref{conv} provides a visual intuition on the number of MC samples required for convergence.  The four algorithms estimates stabilize around the mean in the four simulation scenarios after $N=5000$. The variable degree at which the grey lines are relatively scattered around the mean, hints that the empirical convergence rates deteriorate for larger sample sizes and unranked topologies. Both rSE (Figure \ref{se/count}) and $cv^2$ (Table \ref{ex1cv2})  confirm and quantify this intuition. In Figure \ref{se/count}, we observe an increase of rSE  with sample size (two to four times higher when increasing $n$ from $10$ to $20$).  Similarly, the rSEs of unranked counts (green and red lines) are one to four times higher than the ranked counterparts (black and blue lines). Similarly, the $cv^2$ values in Table \ref{ex1cv2} increase for larger sample sizes (four to twenty times higher) and for unranked algorithms (one to eight times higher than the $cv^2$ values for ranked algorithms).

These two trends are expected and have a clear explanation. The effect of a sample size increase is twofold: first, the state space (not conditioning on the data) becomes much larger; second, the expected number of mutations increases. A higher number of mutations is likely to impose more constraints in the spaces of genealogies and consequently undermine the algorithm performance (we will elaborate on this point below). The poor performance of both unranked algorithms was also expected since our importance sampling proposals for unranked spaces are not ``tailored" to the underlying coalescent resolutions but obtained by correcting the ranked estimates through the equivalence classes coefficients in Proposition \ref{gk} and  \ref{gT} (Section \ref{LTTS}).

A second rather surprising result is the superior performance of the tree shape algorithm (red line in Figure \ref{se/count} and third row  Table \ref{ex1cv2}) when compared to the labeled tree algorithm (green line in Figure \ref{se/count} and last row  Table \ref{ex1cv2}). This was partially unexpected; it may be a consequence of the fact that the equivalence classes in the unranked spaces are, for most cases, smaller than in the labeled spaces, i.e. $c^{TS}(g^T) < c^{LT}(g^K)$. In contrast to this observation,  we note that neither $cv^2$ (Table \ref{ex1cv2}) nor rSE (black and blue lines in Figure \ref{se/count}) provide evidence of different performance between the two ranked algorithms (black and blue lines  in Figure \ref{se/count} overlap). This is also surprising since we expect our proposal for the Kingman algorithm to be ``closer" to uniform than the Tajima algorithm. We will further investigate this point.

Table \ref{ex1cv2} highlights a very poor performance of the unranked methods for $n=20$, especially for labeled trees. A $cv^2$ close to $20$ raises serious concerns on the reliability of the unranked algorithms in this context, suggesting the risk of a variance explosion ($cv^2$ is a rescaled variance) and low efficiency (ESS is about $5\%$ of the chosen $N$). Similar $cv^2$ has been achieved by modern algorithms in real network applications \citep{chen18}.  Lastly, although we do not show plots of \cite{chat18}'s $q_N$ and $Q_N$, these statistics exhibit similar relative performance and identical patterns as those highlighted for rSE and $cv^2$.

\begin{table}[!htbp] \centering 
	\caption{\textbf{ Simulations: $cv^2$ of estimated cardinalities for the four resolutions. } Mean $cv^2$  over $20$ realizations for $N$ in $\{5000,10000,15000\}$. The four incidence matrices are sampled once for each combination of $n$ in $\{10,20\}$ and $\mu$ in $\{5,20\}$.}
	\label{ex1cv2} 
	\scalebox{0.8}{
	\begin{tabular}{@{\extracolsep{5pt}} l |cccc} 
		\\[-1.8ex]\hline \hline
		& \multicolumn{2}{c}{n=10} & \multicolumn{2}{c}{n=20}\\
		& $\mu$=5 & $\mu$=20 & $\mu$=5 &  $\mu$=20 \\ 
		\hline \\[-1.8ex] 
		Tajima algorithm &	$0.716$ & $0.532$ & $3.674$ & $4.554$ \\ 
		Kingman algorithm &	$0.906$ & $0.698$ & $4.085$ & $3.844$ \\ 
		Tree shapes algorithm &	$1.345$ & $0.532$ & $9.068$ & $5.499$ \\ 
		Labeled trees algorithm &	$4.074$ & $1.096$ & $23.821$ & $26.815$ \\ 
		\hline \\[-1.8ex] 
	\end{tabular}}
\end{table} 

\begin{figure}[!htbp]
	\includegraphics[width=\textwidth]{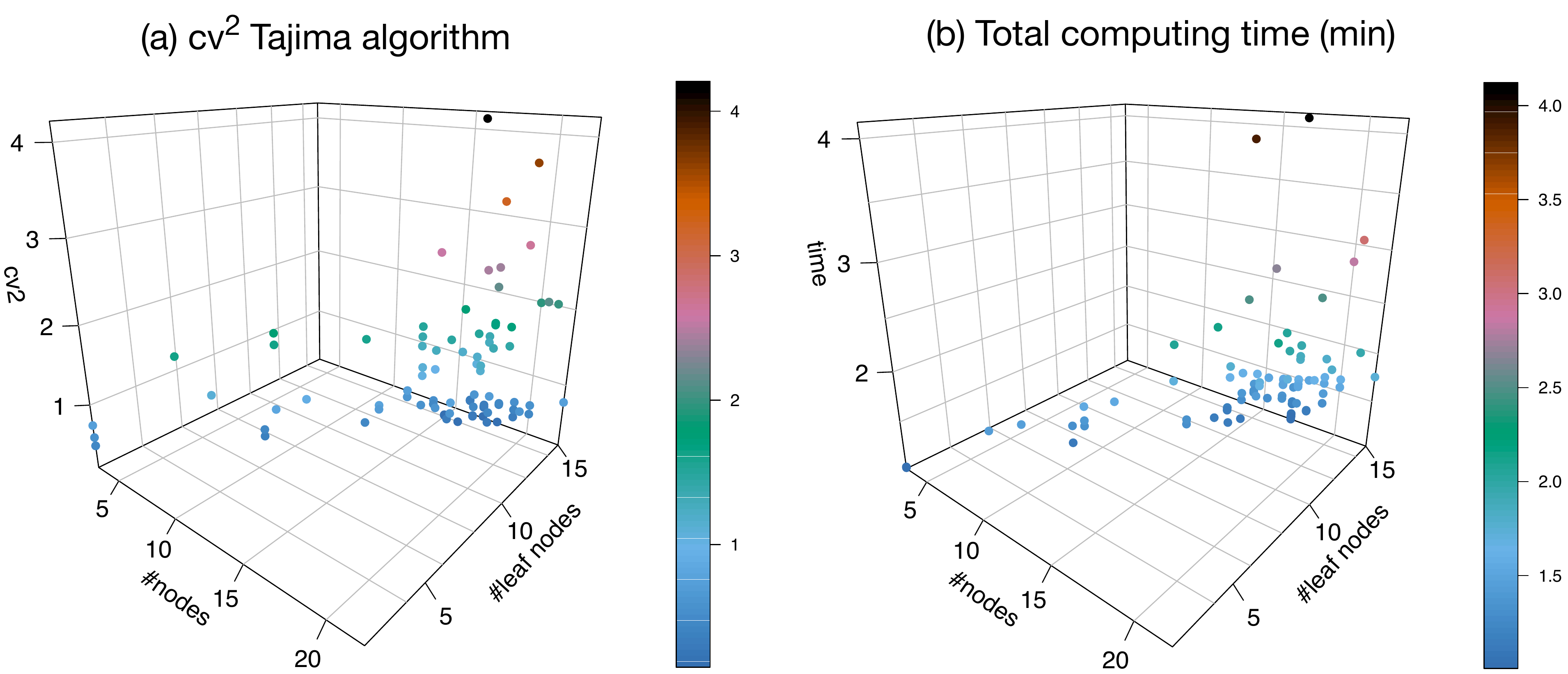}
	\caption{\small{\textbf{ Simulations: $cv^2$ of the Tajima algorithm (a) and total (all topological spaces) computing time (b).} Panel (a) plots the $cv^2$ of that Tajima algorithm as a function of the number of nodes ($\#$nodes) and the number of leaf nodes ($\#$leaf nodes) in the perfect phylogeny $\mathcal{T}^T$. Panel (b) reports the computing time to obtain the estimate counts for the four topologies and the convergence diagnostics (rSE, ESS, $q_N$, $cv^2$). Each dot corresponds to an algorithm run ($N=5000$) for each of the 100 incidence matrices simulated with fixed sample size $n=15$ and  $\mu$ in $\{1,4,7,10,13\}$.  Dot colors represent the numerical value of $cv^2$(Panel (a)) and total time in minutes (Panel (b)) as represented by the vertical bars to the left of the plots.}}
	\label{cv2/time}
\end{figure}

This first simulation study does not assess variance and scalability of our proposals for different data sets, in particular, it does not assess how the quality of the proposals depends on the perfect phylogeny $\mathcal{T}$. To study this question, we simulate $20$ incidence matrices with fixed sample size $n=15$ and for each $\mu$ in $\{1,4,7,10,13\}$ (100 matrices in total). SIS estimated cardinality and diagnostics are computed at $N=5000$.  In Figure \ref{cv2/time} (a) we show the  $cv^2$ of the estimated cardinality of ranked tree shapes (Tajima algorithm), as a function of the total number of nodes and total number of leaf nodes of $\mathcal{T}^T$. We do not plot the $cv^2$ values of the other three resolutions, however, the other resolutions mirror observed values for the Tajima algorithm. In Figure \ref{cv2/time} (b) we plot the total computing time (in minutes) to obtain the five statistics: count estimates, $cv^2$, rSE, ESS, and $Q_N$, for all topologies. In both panels of Figure \ref{cv2/time}, each dot corresponds to one incidence matrix. The mutation parameter $\mu$ is not displayed; it is varied solely to have diverse incidence matrices.

Figure \ref{cv2/time} (a) shows a non linear relationship between the $cv^2$ and both the number of nodes and leaf nodes. As expected, when the  number of nodes (leaf or total) is low, the $cv^2$ values are always low. In this case our SIS proposal distributions are close to the Kingman and Tajima jump chains, which are uniform and close to uniform distributions respectively, on the space of trees. As the number of nodes increases, the $cv^2$ values exhibit a large variation across data sets. In this case, the quality of our proposal deteriorates for few datasets substantially. Our interpretation is that more mutation constraints move our proposal distributions further away from the uniform distribution.

A second result of the simulation study is that the Kingman algorithm has a better performance, on average, than  the Tajima algorithm. The mean of the $cv^2$ values across different data sets is $1.02$ for the Tajima algorithm, and $0.72$ for the Kingman algorithm. This is coherent with the construction of our SIS proposals and the fact that the Kingman jump chain is exactly uniform. 

Lastly,  the computing time (Figure \ref{cv2/time}(b)) exhibits exactly the same patterns displayed by the  $cv^2$ (Figure \ref{cv2/time}(a)). Longer computing times are driven  mostly by the backtracking algorithm. Trivially, the more nodes in $\mathcal{T}^T$, the more nodes ordering $\textbf{v}$ the backtracking algorithm may need to explore. We observe though that the computing time remains low for most data sets, even with large number of nodes.

\section{Case studies}

\subsection{\textbf{Case Study 1}: multi-resolution simulation study}

As discussed in the introduction, there is a growing interest in population genetics to use more efficient lower resolution coalescent models for inferring evolutionary parameters from molecular data \citep{sai15,sai18,pal18}. However, no work has been done to quantify the ``real" gains of working with different coalescent resolutions to real data. It is important to address this question before this research direction is further explored. This case study addresses this question through simulations. 

\begin{table}[!htbp] \centering 
	\caption{\small{\textbf{Case study 1. Multiresolution simulation study: SIS counts for varying sample size ($n$) and mutation rate}($\mu$). $n$ denotes sample size, $\mu$ mutations rate, $|J_{leaf}|$ denotes the number of leaf nodes in $\mathcal{T}$, $|J|$ denotes the number of nodes in $\mathcal{T}$. Counts are reported for the four resolutions plus/minus the standard error.}}
	\label{ex1} 
	\scalebox{0.77}{
		\begin{turn}{0}
		{\footnotesize{	\begin{tabular}{@{\extracolsep{5pt}} cccccccc} 
				\\[-1.8ex]\hline 
				\hline \\[-1.8ex] 
				n & $\mu$ & $|J|$ & $|J_{leaf}|$ & Tajima trees & Kingman trees & Labeled trees & Tree shapes \\ 
				\hline \\[-1.8ex] 
				$5$ & $2$ & $7$ & $5$ & 4.979 +/- 0.02536 & 30.05 +/- 0.2057 & 14.97 +/- 0.1832 & 2.629 +/- 0.02622 \\ 
				& $5$ & $8$ & $5$ & 2.993 +/- 0.005794 & 8.988 +/- 0.01735 & 2.996 +/- 0.005785 & 0.9976 +/- 0.001931 \\ 
				& $10$ & $8$ & $5$ & 2.999 +/- 0.005778 & 8.99 +/- 0.01735 & 2.997 +/- 0.005784 & 0.9996 +/- 0.001926 \\ 
				& $20$ & $9$ & $5$ & 3.01 +/- 0.01414 & 3.024 +/- 0.01414 & 1.008 +/- 0.004713 & 1.003 +/- 0.004714 \\ 
				& $50$ & $9$ & $5$ & 3.016 +/- 0.01414 & 3.025 +/- 0.01414 & 1.008 +/- 0.004713 & 1.005 +/- 0.004714 \\ 
				& $75$ & $9$ & $5$ & 3.018 +/- 0.01414 & 2.976 +/- 0.01414 & 0.9919 +/- 0.004713 & 1.006 +/- 0.004714 \\ 
				$10$ & $2$ & $14$ & $9$ & 499.9 +/- 7.26 & 5350 +/- 91.37 & 21.72 +/- 0.6739 & 1.565 +/- 0.03521 \\ 
				& $5$ & $14$ & $9$ & 509.3 +/- 7.354 & 5367 +/- 91.69 & 21.94 +/- 0.6798 & 1.561 +/- 0.03405 \\ 
				& $10$ & $15$ & $9$ & 499.7 +/- 8.79 & 1765 +/- 35.78 & 7.134 +/- 0.2438 & 1.471 +/- 0.03291 \\ 
				& $20$ & $16$ & $9$ & 430.7 +/- 5.871 & 1235 +/- 17.07 & 2.94 +/- 0.04064 & 1.026 +/- 0.01398 \\ 
				& $50$ & $16$ & $9$ & 422.8 +/- 5.822 & 1235 +/- 17.07 & 2.94 +/- 0.04064 & 1.007 +/- 0.01386 \\ 
				& $75$ & $18$ & $10$ & 418 +/- 5.724 & 1249 +/- 17.35 & 2.974 +/- 0.04131 & 0.9952 +/- 0.01363 \\ 
				$15$ & $2$ & $15$ & $11$ & 3474000 +/- 53560 & 1.112e+10 +/- 141400000 & 1087000 +/- 46170 & 65.74 +/- 1.95 \\ 
				& $5$ & $20$ & $12$ & 297200 +/- 6108 & 3318000 +/- 68620 & 603.1 +/- 36.96 & 3.962 +/- 0.1131 \\ 
				& $10$ & $21$ & $12$ & 60630 +/- 1475 & 650800 +/- 14850 & 434.2 +/- 13.6 & 1.902 +/- 0.05244 \\ 
				& $20$ & $22$ & $12$ & 60330 +/- 1386 & 226000 +/- 5297 & 147.4 +/- 4.842 & 1.838 +/- 0.04675 \\ 
				& $50$ & $24$ & $13$ & 45240 +/- 961.4 & 141600 +/- 3048 & 45.3 +/- 0.9987 & 1.004 +/- 0.02134 \\ 
				& $75$ & $26$ & $14$ & 43410 +/- 894.5 & 141100 +/- 3036 & 45.1 +/- 0.9922 & 0.9637 +/- 0.01986 \\ 
				$20$ & $2$ & $20$ & $15$ & 8.05e+11 +/- 1.571e+10 & 1.731e+17 +/- 2.699e+15 & 1.588e+10 +/- 4.211e+09 & 2084 +/- 86.73 \\ 
				& $5$ & $23$ & $14$ & 8.429e+09 +/- 209800000 & 6.869e+11 +/- 1.56e+10 & 32250 +/- 1846 & 16.58 +/- 0.6664 \\ 
				& $10$ & $23$ & $13$ & 4.339e+09 +/- 83760000 & 1.189e+11 +/- 2.274e+09 & 5485 +/- 300.7 & 5.19 +/- 0.1093 \\ 
				& $20$ & $28$ & $16$ & 4.344e+09 +/- 95910000 & 2.757e+10 +/- 621200000 & 1237 +/- 85.72 & 5.102 +/- 0.1238 \\ 
				& $50$ & $28$ & $15$ & 437600000 +/- 9594000 & 2.816e+09 +/- 64850000 & 335.4 +/- 9.758 & 0.9448 +/- 0.02071 \\ 
				& $75$ & $32$ & $17$ & 458700000 +/- 10720000 & 2.754e+09 +/- 60920000 & 335.7 +/- 9.126 & 0.9904 +/- 0.02315 \\ 
				\hline \\[-1.8ex] 
			\end{tabular} }}
	\end{turn}}
\end{table} 

\begin{figure}
	\includegraphics[width=1.0\textwidth]{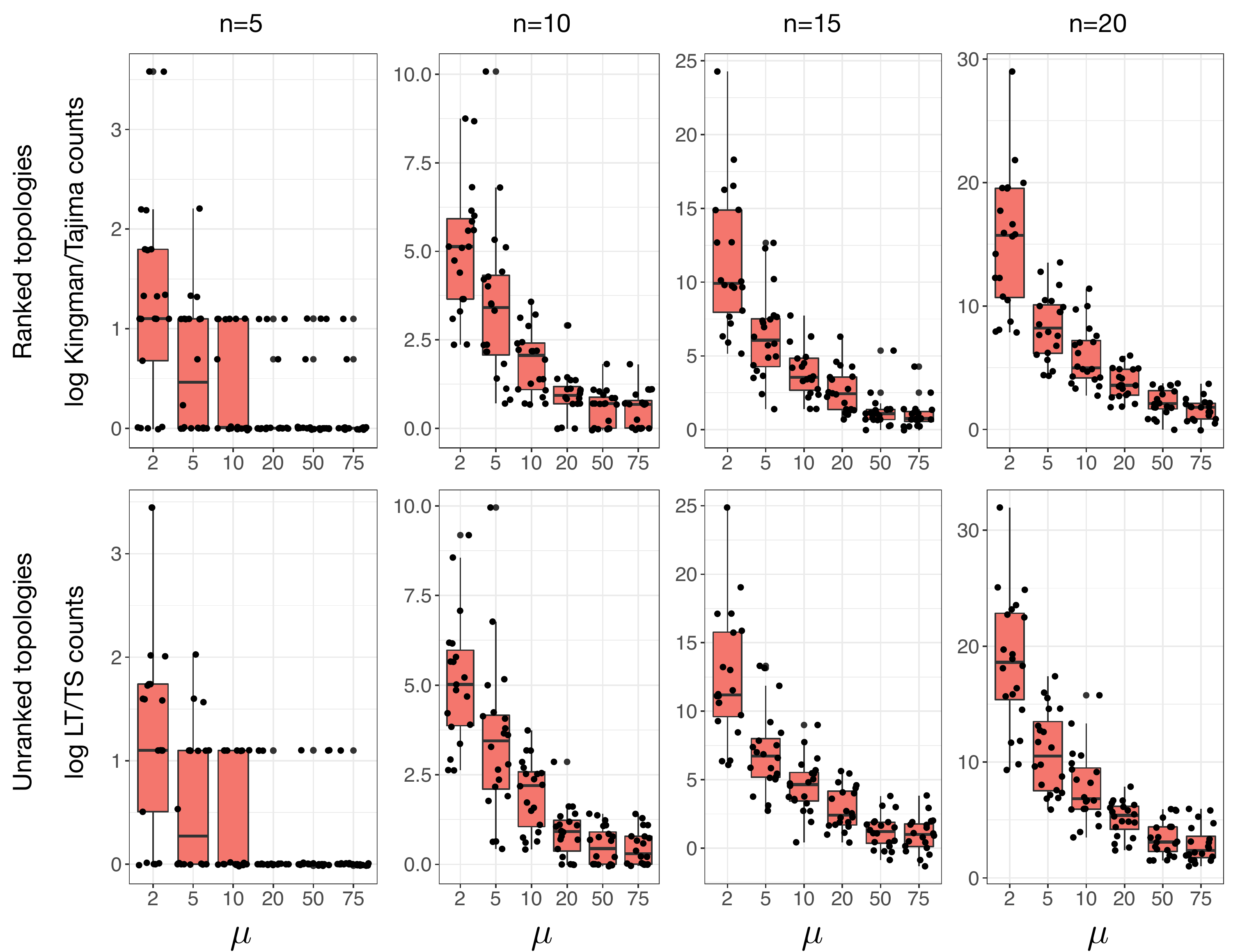}
	\caption{\small{\textbf{Case study 1. Multiresolution simulation study:} log ratio of estimated counts for varying $n$ and $\mu$.
			Rows correspond to the log ratio of cardinalities between Kingman and Tajima topologies (first row) and the log ratio of cardinalities between labeled trees and tree shapes (second row). Columns represent different sample sizes $n$ and boxplots within each plot show results under different mutation rates. Boxplots are generated from 50 independent simulations. Dots represent the SIS count estimates computed for $N= 5000$ (for $n=5,10$), $N= 10000$ (for $n=15$), and $N= 15000$ (for $n=20$). Dots are spread over the box width for ease of visualization.}}
	\label{n_mu}
\end{figure}

\textit{Data.}  We simulate $50$ incidence matrices for each of $24$ possible pairings of $n$ in $(5,10,15,20)$ and $\mu$ in $(2,5,10,20,50,75)$. For each simulated dataset, we estimate the cardinality of the four constrained topological spaces. Based on the results observed in the previous section,  we set $N=5000$ for $n \in (5,10)$, $N=10000$ for $n=15$, and $N=15000$ for $n=20$.

\textit{Results.} In the first row of Figure \ref{n_mu}, we show the log ratio of the estimated cardinalities of Kingman topologies to Tajima topologies, and in the second row, we show the log ratio of the estimated cardinalities of labeled trees to tree shapes. Table \ref{ex1} summarizes the results for a single iteration picked at random from the $50$ replicates. We note that an average over the $50$ replicates is not insightful given the high variability of the incidence matrices sampled (which can be observed by the length of the boxplots in Figure \ref{n_mu}).

Figure \ref{n_mu} shows that the cardinality of the space of Kingman trees is always larger than the cardinality of the space of Tajima trees (first row), and the cardinality of the space of labeled trees is always larger than the cardinality of the space of tree shapes (second row). As mentioned in the introduction, this is relevant for population genetic studies that aim to estimate evolutionary parameters by integrating over the space of trees. When analyzing how much effective reduction in the tree space is gained by assuming the infinite sites model alone, we note that high mutation rate will, in general, constrain the tree sample space more than a low mutation rate. This reduction is accentuated for Kingman's trees under every simulation scenario. For example for $n=20$, there are $5.64 \times 10^{29}$ (exact) unconstrained Kingman's trees (using formula from section 2). This number drops to $5.67 \times 10^{10} \pm 2.08\times 10^9$  (SIS estimate) for a simulated dataset with $\mu=20$. The (exact) unconstrained number of ranked tree shapes is $2.9 \times 10^{13}$, which drops to $4.63 \times 10^{10} \pm 1.47\times 10^8$  (SIS estimate) for a simulated dataset with $\mu=20$. A similar pattern is observed for unranked tree shapes.

For fixed sample size, we observe that the difference between Kingman and Tajima cardinalities decays exponentially from low mutation regimes to high mutation regimes (moving along $x$ axes in the plots of the first row in Figure \ref{n_mu}). The same trend is observed between labeled topologies and tree shapes (moving along $x$ axes in the plots of the second row of Figure \ref{n_mu}). For example keeping $n=20$ fixed, the Tajima space is on average approximately $2 \times 10^{11}$ smaller than the Kingman space for $\mu=2$, $6900$ times smaller for $\mu=10$, and only $7$ times smaller for $\mu=75$.

For fixed mutation regime, the respective differences between labeled topologies counts and unlabeled topologies counts, (Kingman vs. Tajima, labeled trees vs tree shapes)  becomes more pronounced as we increase the sample size (columns in Figure \ref{n_mu}.) For example keeping $\mu=10$ fixed,  the Tajima space is on average approximately $1.7$ times smaller than the Kingman space for $n=5$, $225$ times smaller for $n=10$, and only $6900$ times smaller for $n=20$.

The case study suggests that modeling with lower resolution coalescent models (unlabeled) could be advantageous when applied to organisms with low mutation rate such as humans or mammals. However, the advantages are less pronounced for rapidly evolving organisms such as pathogens and vira.

\subsection{Case study 2: Human mitochondrial and nuclear DNA data}

Present day molecular data at a nonrecombining segment (or locus) from a sample of individuals inform about past population history and other evolutionary parameters \citep{tav04}. Multiple independent loci, perhaps loci at different chromosomes or loci from distant locations across the genome provide multiple independent realizations of the same coalescent process with shared population history. 
In this case study we show that under the infinite sites model, independent chromosomal regions can impose a completely different set of constraints on their local tree topology. We provide a quantitative evidence of this effect. We note that these constraints do not arise employing alternative mutation models, e.g. Jukes-Cantor. We apply our method to mitochondrial DNA (mtDNA)  and nuclear DNA (nDNA) data. mtDNA is known to have a much higher mutation rate  than nDNA \citep{song05}, and thus, we expect a larger reduction in the state space. In addition, we explore how these constraints vary across resolutions.

\textit{Data.} We analyze $n=30$ samples of mitochondrial DNA (mtDNA) selected uniformly at random from the $107$ Yoruban individuals available in the 1000 Genomes Project phase 3 \citep{1000genomes}. We retained the coding region: $576-16,024$ according to the rCRS reference of Human Mitochondrial DNA \citep{and81,andr99} and removed 38 indels (insertions and deletions are not modeled in our approach). Of the 260 polymorphic sites, we only retained 240 sites compatible with the infinite sites mutation model. Ancestral states ($0$s in the incidence matrix) were obtained from the RSRS root sequence \citep{beh12}. For nDNA, we analyze 2320 sites of the $\beta-$ globin gene in chromosome $11$ from $n=30$ Melanesian individuals subsampled from a larger incidence matrix ($n=57$) analyzed in \citet{grif99}. It was already part of a larger dataset described in \citet{har97}. 
Figure \ref{mt} plots the Tajima perfect phylogeny for the two datasets. The mtDNA comprises  $29$ haplotypes, and the nDNA comprises $4$ haplotypes.

\textit{Results.}  We estimated the cardinalities of the four constrained topologies at $N=35000$. The number of iterations $N$ is chosen by the criteria discussed in the simulation section. Table \ref{cs2} shows the size of the unconstrained spaces, along with our SIS estimates and $cv^2$ values.

\begin{table}[!htbp] \centering 
	\caption{\textbf{ Case study 2: estimated counts and $cv^2$ for the mtDNA and nDNA datasets.} Unconstrained refers to the size of the underlying tree space (rows) cardinalities when we are not conditioning on the data. Estimates are obtained with $N=35000$, $\pm$ adds or subtract the standard error.}
	\label{cs2} 
	\scalebox{0.9}{
{\footnotesize{	\begin{tabular}{@{\extracolsep{5pt}} l | c  cc cc } 
		\\[-1.8ex]\hline 
		\hline \\[-1.8ex] 
		Dataset  (n=30) & & \multicolumn{2}{c}{Yoruban mtDNA} & \multicolumn{2}{c}{$\beta$-globin locus nDNA}\\
		\hline
		& Unconstrained & Estimate & $cv^2$ & Estimate & $cv^2$  \\ 
		\hline \\[-1.8ex] 
		Tajima  & $2.31 \times 10^{25}$ & $1.05 \times 10^{20} \pm 6.19\times 10^{18}$ & $69.2$ & $3.10 \times 10^{23} \pm 2.21\times 10^{21}$ &$1.78$\\ 
		Kingman  & $4.37 \times 10^{54}$ & $7.17 \times 10^{23} \pm 3.01\times 10^{22}$ & $36.9$ & $1.07 \times 10^{40} \pm 4.68\times 10^{37}$& $0.66$\\ 
		Tree shapes  &	$1.41 \times 10^{9}$ & $1.33 \times 10^3 \pm 1.18\times 10^2$ & $165.9$ & $3.10 \times 10^{6} \pm 2.81\times 10^5$ & $343.6$ \\ 
		Labeled trees  & $4.95 \times 10^{38}$ & $1.17 \times 10^{12} \pm 2.10\times 10^{11}$& $674.1$&  $4.65\times 10^{27} \pm 1.49\times 10^{27}$  & $372.6$ \\ 
		\hline \\[-1.8ex] 
	\end{tabular} }}}
\end{table} 

First, the space of trees compatible with the $\beta$-globin dataset are many orders of magnitude larger than with the mtDNA dataset. Following Case Study $1$, this effect could have been predicted by the lower number of segregating sites. 
Second, results in Table \ref{cs2} gives a different perspective on the computational limits of coalescent based inference: under Kingman coalescent (still the dominant model in the field of population genetics), the sample space of trees for the mtDNA is massively smaller than the unconstrained space, it drops from $4.37 \times 10^{54}$ to $7.17 \times 10^{23} \pm 3.01\times 10^{22}$. Whereas the presence of a reduction was known, such a reduction had never been quantified. 

With respect to the performance of our algorithms, we confirm that the variance of the Kingman and the Tajima algorithms (ranked tree topologies) are mostly determined by the perfect phylogeny structure rather than the sample size:  the $cv^2$ for mtDNA is larger than the $cv^2$ for nDNA (first and second rows of Table \ref{cs2}); in particular the nDNA data (4 leaf nodes) exhibits very low $cv^2$ (second row in Table \ref{cs2}). The large $cv^2$ values obtained with the unranked algorithms (third and fourth rows) questions the validity of our estimated counts. We note that the order of magnitude of the estimate is more meaningful than the point estimate itself;  the reductions in cardinality with respect to the unconstrained size are all consistent with theoretical expectations and the simulation studies. Similarly, the reductions in cardinalities across resolutions are more extreme in the mtDNA dataset than in the $\beta$-globin dataset. Surely, this case study displays a situation where a sequential importance sampler experiences variance explosion.



\begin{figure}[!htbp]
	\includegraphics[scale=0.5]{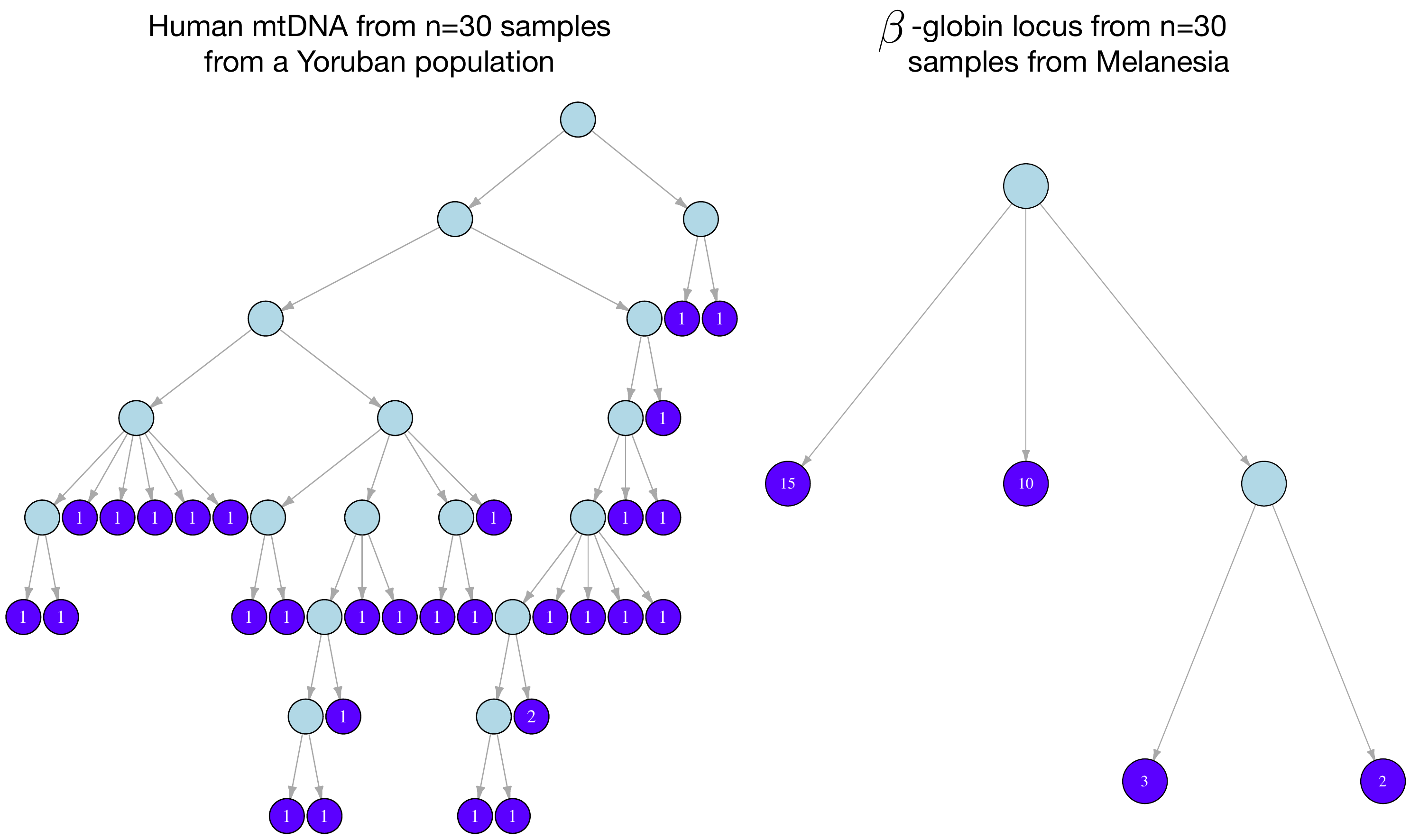} 
	\caption{\small{\textbf{Case study 2: Tajima perfect phylogenies of Yoruban mitochondrial data (left) and Melanesian $\beta-$globin locus data (right).} Left panel: $\mathcal{T}^T$ of  $n=30$ sequences of mtDNA  sampled at random from 107 Yoruban individuals available from the 1000 Genomes Project phase 3 \citep{1000genomes}. Right panel: $\mathcal{T}^T$ of  $n=30$ sequences of DNA from the $\beta$-globin locus sampled at random from 57 Melanesian individuals available  in \cite{ful94}. Dark blue nodes represent the leaf nodes. The number within a node is the number of individuals assigned to that node.}}
	\label{mt}	
\end{figure}

\section{Discussion}

In this article, we propose a set of algorithms to sequentially sample tree topologies compatible with the observed data. We use our sampling algorithms to estimate the cardinality of the sample space of tree topologies with importance sampling. We assume that our sampled locus is nonrecombining and that the infinite sites assumption holds. In the infinite sites mutation model, each site in the locus can mutate only once. While in practice it is possible to observe sites that are not compatible with this mutation model, the percentage of these cases is usually marginal for some organisms such as humans and other primates. The major implication of the infinite sites mutation model is that observed data impose constraints on the space of compatible trees. We analyze the cardinality of the following constrained tree spaces: ranked labeled trees (Kingman), ranked tree shapes (Tajima), unranked labeled  trees and tree shapes. These sample tree spaces  correspond to different resolutions of the $n$-coalescent process. 

Our proposed algorithms sample a tree topology in a bottom-up fashion: given a sample of $n$ individuals, we sequentially build the trees in $n-1$ steps. We employ a graphical representation of the data called perfect phylogeny that allows us to account for the combinatorial constraints imposed by the data. The perfect phylogeny ``groups'' individuals in different nodes: in our algorithms coalescent events are allowed solely among individuals assigned to the same node. Within each node, the choice of which individuals coalesce is regulated by the underlying jump chain of the coalescent process we are modeling. 


The research question tackled in this paper was motivated by the challenging inference problem of coalescent methods used in population genetics.  There is a growing interest in exploring different resolutions of the $n$-coalescent process for inference of evolutionary parameters from molecular sequence data in order to gain computational tractability.  
Indeed, the size of the hidden state-space of trees in the standard Kingman coalescent grows superexponentially with the sample size.  Despite the a priori reduction in the cardinality of the state-space obtained by using coarser modeling resolutions, e.g. Tajima $n$-coalescent, a quantification of this reduction  conditionally on the data was unknown. Given the amount of work and software available tailored to the Kingman $n$-coalescent, it was in our opinion fundamental to quantify the benefits of modeling with different resolutions before any more work is carried out.

From our empirical analyses, it emerges that the benefits of using a coarser resolution depends largely on the data considered. The advantages are striking as the sample size increases, especially in regimes of low mutation rate such as in nuclear human DNA variation. In general, the greater the number of observed mutations is, the less are the benefits of employing coarser resolutions. This is consistent with theoretical predictions: under the infinite sites assumption, mutations induce some labeling: individuals can be distinguished according to private mutations. In this case, the benefits of employing an unlabeled tree are less evident. This observation applies to both ranked and unranked trees. In applications where the number of mutations is  low, the benefits of coarser resolutions remain clear.

In the context of recombination, the perfect phylogeny is not longer a single tree but a set of trees called \textit{perfect phylogeny forest} \citep{gus14} and we believe that our methodology can be extended in this context. Indeed, many new interesting research questions open up; for example, the number of trees in the forest, i.e. the number of recombination events, which is itself a challenging problem known as the \textit{minimum perfect phylogenetic forest problem}. In this case, the target is not a space of tree genealogy but a space of networks known as \textit{the ancestral recombination graph} \citep{griffiths1997ancestral} and a future area of research.

\section*{Acknowledgments}
We would like to acknowledge Persi Diaconis who brought our attention to the use of sequential importance sampling for approximate counting. This work is supported by R01 GM131404 and the Alfred P. Sloan Foundation. We would like to acknowledge two anonymous reviewers for their suggestions that greatly improved the manuscript.

\bibliographystyle{agsm}
\bibliography{biblio_postdoc}

\end{document}